\documentclass{article}

\usepackage{arxiv}

\usepackage[utf8]{inputenc} 
\usepackage[T1]{fontenc}    
\usepackage{hyperref}       
\usepackage{url}            
\usepackage{booktabs}       
\usepackage{amsfonts}       
\usepackage{nicefrac}       
\usepackage{microtype}      
\usepackage{lipsum}		
\usepackage{graphicx}
\usepackage{doi}

\usepackage{graphicx}
\usepackage{amsmath}
\usepackage{amssymb}
\usepackage{amsthm}
\usepackage[ruled, vlined]{algorithm2e}
\usepackage{subcaption}
\usepackage{hyperref}
\usepackage{bm}

\newtheorem{remark}{Remark}
\newtheorem{example}{Example}
\newtheorem{lemma}{Lemma}
\newtheorem{problem}{Problem}
\newtheorem{definition}{Definition}
\newtheorem{theorem}{Theorem}
\newtheorem{corollary}{Corollary}

\title{Partitioning algorithms for weighted trees and cactus graphs}

\date{} 					

\author{ \href{https://orcid.org/0000-0002-3446-4343}{\includegraphics[scale=0.06]{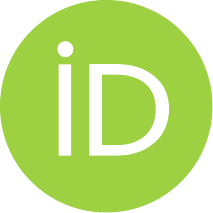}\hspace{1mm}Maike Buchin}\\
	Department of Computer Science\\
	Ruhr University Bochum\\
	\texttt{maike.buchin@rub.de} \\
	\And
	\href{https://orcid.org/0000-0002-6401-7157}{\includegraphics[scale=0.06]{orcid.pdf}\hspace{1mm}Leonie Selbach} \\
	Department of Computer Science\\
	Ruhr University Bochum\\
	\texttt{leonie.selbach@rub.de} \\
}



\hypersetup{
pdftitle={Partitioning algorithms for weighted trees and cactus graphs},
pdfsubject={q-bio.NC, q-bio.QM},
pdfauthor={Maike~Buchin, Leonie~Selbach},
pdfkeywords={Graph partition, Tree, Cactus graph, Dynamic programming},
}

\begin{document}
\maketitle

\begin{abstract}
In this paper, we consider different constrained partition problems for weighted trees and cactus graphs. We focus on the $(l,u)$-partition problem, which is the problem of partitioning a weighted graph into connected clusters such that each cluster fulfills the lower and upper weight constraints $l$ and $u$. Partitioning into a minimum, maximum or a fixed number of clusters is known to be NP-hard in general, but polynomial-time solvable on trees. We prove that these three variants of the $(l,u)$-partition problem can be solved for cactus graph as well by presenting a polynomial-time algorithm. Additionally, we present an efficient method to compute the corresponding partitions. For other optimization goals or additional constraints, the partition problem becomes NP-hard - even on trees and for a lower weight bound equal to zero. We show that our method can be used as an algorithmic framework to solve other partition problems for weighted trees and cactus graphs with a pseudopolynomial runtime.
\end{abstract}

\keywords{Graph partition \and Tree \and Cactus graph \and Dynamic programming}

\section{Introduction}\label{se:intro}
Graph partitioning is an algorithmic tool that has many applications including image processing, scientific simulations and the analysis of complex networks such as social or road networks~\cite{schloegel2003graphpartitioning,Buluc2016}. Our research is motivated by an application in the field of bioinformatics, namely the fragmentation of tissue samples. As we will explain later, underlying problem of constrained shape decomposition can be reduced to the partition of weighted cactus graphs~\cite{socg20}. These are graphs in which every two simple cycles have at most one common vertex. Here, we study different constrained partition problems on cactus graphs as well as trees. A preliminary version of this paper appeared in~\cite{eurocg20}.\par 
In the following, a \emph{partition} of a graph $G=(V,E)$ refers to a partition of the vertex set $V$ into disjoint connected subsets $V_i$ such that $\bigcup V_i=V$. We call these subsets \emph{components} or \emph{clusters}. The \emph{size} of a partition $P=\{V_1,V_2,\ldots,V_p\}$ is the number of clusters and is denoted by $\vert P\vert$. A partition of size $p$ is called a \emph{$p$-partition}.  We consider partition problems on weighted graphs. Weights can be assigned to vertices or edges or both. We mostly focus on vertex-weighted graphs, where every vertex $v$ is assigned a non-negative integer weight $w(v)$. The weight of a cluster $V_i$ is defined as the sum of the weights of the vertices it contains, i.e., $w(V_i) = \sum_{v\in V_i}w(v)$. Let $P$ be a partition and $l$ and $u$ two non-negative integer parameters with $l\leq u$. $P$ is called \emph{$(l,u)$-partition} if the weight of each cluster in $P$ lies between $l$ and $u$, i.e., $l\leq w(V_i)\leq u$ for all $V_i\in P$. A $p$-$(l,u)$-partition is an $(l,u)$-partition of size $p$. A \emph{MinNum-} or \emph{MaxNum}-$(l,u)$-partition is an $(l,u)$-partition with the minimum or respectively maximum number of clusters, see for example Figure~\ref{fig: 2example}. \par 

\begin{figure}[t]
	\centering
	\begin{subfigure}[c]{0.48\textwidth}
		\includegraphics[scale=0.8]{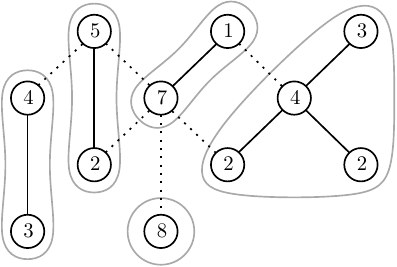}
		\subcaption{5-(3,12)-partition.}
	\end{subfigure}
	\hfill
	\begin{subfigure}[c]{0.48\textwidth}
		\includegraphics[scale=0.8]{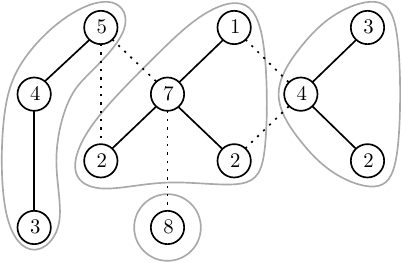}
		\subcaption{MinNum-(3,12)-partition.}
	\end{subfigure}
	\caption{Example of a $p$- and MinNum-$(l,u)$-partition of a weighted cactus graph. The values inside the vertices are their weights.}
	\label{fig: 2example}
\end{figure}

\subsection{Related work} 
Many graph partitioning problems are known to be NP-hard in general but polynomial-time solvable for certain graph classes. Next, we discuss results for related partition problems. We focus on vertex partitions of weighted graphs into connected clusters.\par 
Partitioning a graph into connected clusters such that each cluster contains a fixed number $k$ of vertices is NP-complete even for planar bipartite graphs, but can be solved for both trees and series-parallel graphs in polynomial time~\cite{DYER1985139}. This problem describes a special case of the $(l,u)$-partition problem if the graph has unit weights and $k=l=u$. Partitioning a graph such that the weights of all clusters are as equal as possible is known as the equipartition problem. This problem proved to be NP-hard for spiders and thus trees in general, but polynomial-time solvable for other graph classes, such as stars, worms and caterpillars~\cite{de1990fair}.\par 
Let $G$ be a graph with weights on the vertices and costs on the edges. Finding a partition such that the weight of the clusters is bounded by $u$ while minimizing the cost of the partition is NP-hard even on trees.  However, algorithms with pseudopolynomial runtime exist. One can solve this problem for trees in $\mathcal{O}(u^2n)$ time with a bottom-up approach~\cite{lukes1974efficient} or in $\mathcal{O}(un^2)$ time with a left-to-right approach~\cite{johnson1983knapsacks}. For sibling graphs, this problem can be solved in $\mathcal{O}(u^3n)$ time~\cite{bordawekar2008algorithm}. If we have unit weights on the edges, the problem reduces to the MinNum $(0,u)$-partition problem, namely finding a partition with constrained weights while minimizing the number of clusters. For trees, such a partition can be found in linear time~\cite{kundu1977linear}. Hamacher et al. considered a different variant of this problem, where the tree has $q$ different weights on each vertex, which are bounded by individual weight bounds $u_1,\ldots,u_q$. They proved that finding a partition with minimal size is NP-hard if $q$ or the maximum degree of the tree is unbounded~\cite{hamacher2000tree}.\par 
Perl and Snir considered a weight and capacity constrained partition problem in which not only the weight but also the capacity of the clusters is bounded from above and the goal is to find partition with the minimum number of clusters. They proved that this problem is NP-hard for trees and presented pseudopolynomial-time algorithms~\cite{perl1983circuit}. \par 
The problem of finding a $p$-partition of a vertex-weighted graph while minimizing the weight of the heaviest cluster is known as the min-max $p$-partition problem. The max-min $p$-partition problem is defined analogously. Both problems are NP-hard for general graphs~\cite{perl1981max}, but can be solved for trees in polynomial time. The first approach of Becker et al. used a shifting algorithm technique, which also works for different weight functions~\cite{becker1983shifting,becker1995shifting}. Megiddo provided polynomial-time algorithms for trees and paths~\cite{megiddo1984applying} and the runtime was further improved by Cole~\cite{Cole1987slowing}. Finally, Frederickson presented an algorithm that solves the max-min and min-max problems on trees in linear time~\cite{frederickson1991optimal}. The size-constrained min-max $p$-partition problem is another variant in which two weight functions (denoted as weights and sizes) are applied  on the vertices. This problem has the additional condition that the size of the clusters is bounded from above. Agasi et al. showed that this problem is NP-hard on trees and presented a pseudopolynomial-time algorithm, which also applies the shifting technique~\cite{agasi1993shifting}.\par 
Lucertini et al. showed that the $p$-$(l,u)$-partition problem can be solved on paths in linear time~\cite{lucertini1993most}. Ito et al. proved that this problem is NP-hard for series-parallel graphs and thus graphs in general~\cite{ITO2006142}. They presented pseudopolynomial-time algorithms for series-parallel graphs as well as partial $k$-trees.   Furthermore, Ito et al. proved that the decision variant of these problems can be solved on trees in polynomial time~\cite{Ito2012}. The respective runtimes are presented in Table~\ref{tab: runtimes}. Another variant of these problems consider a graph with $q$ different weights on the vertices and $q$ individual weight bounds $l_i$ and $u_i$. The corresponding MinNum and MaxNum as well as the $p$-partition problems can be solved on series-parallel graphs in $\mathcal{O}(u^{4q}n)$ and respectively $\mathcal{O}(p^2u^{4q}n)$ time~\cite{ito2007partitioning}.\par 
There are many problems, such as the graph coloring problem or the maximum independent set problem, that can be solved on cactus graphs in polynomial or even linear time~\cite{das2012some}. Because all cactus graphs have a tree width of at most two, they present a subclass of partial $k$-trees with $k=2$. Thus, one can use the method of Ito et al.~\cite{ITO2006142} to solve the $p$-$(l,u)$-partition problem for cactus graphs in $\mathcal{O}(p^2u^6n)$ time. The MinNum and MaxNum $(l,u)$-partition can be solved in $\mathcal{O}(u^6n)$ time. Note that both of these runtimes are pseudopolynomial because they depend on the upper weight bound $u$. In this paper, we present an algorithm that solves these problems for cactus graphs in polynomial time. 

\begin{table}[h]
	\begin{tabular}{l|l|l}
		graph class
		& $p$-partition problem  
		& MinNum/MaxNum problem \\ \hline
		path~\cite{lucertini1993most}                & $\mathcal{O}(n)$                
		& $\mathcal{O}(n)$   \\
		tree (decision~\cite{Ito2012})         &   $\mathcal{O}(p^4n)$  
		& $\mathcal{O}(n^5)$ \\
		\textbf{tree (computation}) &     $\bm{\mathcal{O}(p^4n)}$  
		& $\bm{\mathcal{O}(n^5})$ \\
		\textbf{cactus graph}         &   $\bm{\mathcal{O}(p^4n^2)}$   
		& $\bm{\mathcal{O}(n^6)}$  \\
		series-parallel graph~\cite{ITO2006142} & NP-complete, $\mathcal{O}(p^2u^{4}n)$                         
		& NP-hard, $\mathcal{O}(u^4n)$   \\
		partial $k$-tree~\cite{ITO2006142}  &  NP-complete, $\mathcal{O}(p^2u^{2(k+1)}n)$ 
		& NP-hard, $\mathcal{O}(u^{2(k+1)}n)$ 
	\end{tabular}
	\caption{Complexity of the general $(l,u)$-partition problems (computation variant unless stated otherwise) for different graph classes. The bold results are presented in this paper.}
	\label{tab: runtimes}
\end{table}

\subsection{Contribution and paper organization}
We consider different weight-constrained partition problems. Our contribution includes the following main aspects:
\begin{itemize}
	\item We consider the general $(l,u)$-partition problems on cactus graphs. We extend the partition approach for trees to cactus graphs by including a procedure that deals efficiently with cycles in the graph. We show that this method solves the $p$-$(l,u)$-partition problem as well as the MinNum and MaxNum partition problems in polynomial time. 
	\item We present a method to compute $(l,u)$-partitions for both trees and cactus graphs.
	Previous research included a polynomial-time algorithm that was restricted to the decision problem and a computation method with pseudopolynomial runtime. We show that the computation problem can be solved in polynomial time as well. 
	\item We show that our method can be used as an algorithmic framework to solve other weight-constrained partition problems. The partition method presented in this paper can be adjusted for different NP-hard partition problems to obtain solutions with a pseudopolynomial runtime. We consider each problem for both trees and cactus graphs. The resulting algorithms extend and improve on other known results.   
\end{itemize}
In Section~\ref{sec: motivation}, we present the motivation behind our research and show how a shape decomposition problem can be reduced to the considered partition problem for cactus graphs. In Section~\ref{sec: preliminiaries}, we introduce preliminary definitions and notations. In Section~\ref{sec: general l,u partition problems}, we consider the general $(l,u)$-partition problems for cactus graphs and present polynomial-time algorithms for the decision as well as computation problems. In Section~\ref{sec: other problems}, we show how the presented partition methods for trees and cactus graphs can be adjusted to obtain solutions for a selection of other weight-constrained partition problems. In Section~\ref{sec: conclusion}, we conclude with some remarks and open problems.

\section{Motivation}\label{sec: motivation}
As mentioned before, our research is motivated by a shape decomposition problem, which occurs in the microdissection of tissue samples. This problem can be formalized as computing a decomposition of a simple polygon such that every subpolygon fulfills certain constraints in size and shape. In our previous research, we developed a skeleton-based decomposition approach~\cite{socg20,selbach2021shape}. Our method creates a decomposition based on skeleton branches and allows the implementation of various additional constraints. The computation has polynomial runtime and works efficiently in practice. However, extending this approach to general decompositions leads to an exponential runtime. For certain constraints, i.e. monotone constraints such as area, we can reduce this skeleton-based decomposition problem to the partition of cactus graphs. In the following, we briefly introduce the underlying problem and present the reduction.\par
The \emph{skeleton} or medial axis of a polygon $P$ is defined as the set of points that are the centers of maximal disks inside of $P$.
The points, where the maximal disk touches the boundary of $P$, are called \emph{contact points}. Let $C(s)$ be the set of contact points of a skeleton point $s$. The \emph{degree} $deg(s)=\vert C(s)\vert$ of a skeleton point is defined as the number of contact points. The skeleton points with $deg(s)\geq 3$ are called \emph{branching points}. If the polygon $P$ is simple, the skeleton graph is a tree. We apply a skeletonization method, which creates a discrete and simplified skeleton. The computed skeleton consists of a finite number of skeleton points and represents the main morphological features of the polygon. We developed a skeleton-based decomposition approach to takes these morphological features into account.\par 
In our skeleton-based approach, we allow only cuts created by line segments between skeleton points and their contact points. We say a subpolygon $P'$ of $P$ is generated by two skeleton points $s$ and $t$ if it consists of two consecutive line segments for each of the two skeleton points and the boundary of $P$ that lies in between the corresponding contact points. If the skeleton point $t$ is an end point of a skeleton branch, the subpolygon generated by $(s,t)$ spans the boundary of $P$ between two corresponding contact points of $s$. For a polygon $P$ and its skeleton $S$, we denote the set of all possible subpolygons created by two adjacent skeleton points by $Z(P,S)$ (see Fig.~\ref{subfig: poly with skeleton}). If there are $n$ skeleton points in $S$, we have $\vert Z(P,S)\vert = n-1$.
We call two subpolygons that are generated by two pairs of skeleton points $(s,t)$ and $(s,t')$ \emph{adjacent} if they share at least one line segment at the common skeleton point $s$. We call a set $A$ of disjoint subpolygons $P_i$ a \emph{cluster} if they are adjacent in such a way that their union creates a single subpolygon $P(A)=\bigcup_{P_i\in A}P_i$. \par
\begin{figure}[t]
	\centering
	\begin{minipage}{\textwidth}
		\begin{subfigure}[t]{0.3\textwidth}
			\centering
			\includegraphics[scale=0.8]{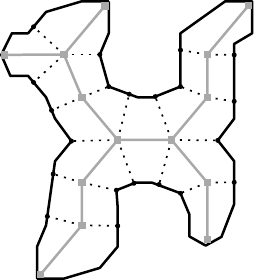}
			\caption{Polygon $P$ with its skeleton $S$ (gray) and the maximal skeleton-based partition $Z(P,S)$ (dotted)}
			\label{subfig: poly with skeleton}
		\end{subfigure}
		\hfill
		\begin{subfigure}[t]{0.3\textwidth}
			\centering
			\includegraphics[scale=0.8]{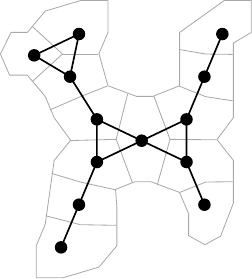}
			\caption{Cactus graph $G_P$ resulting from $P$ and $S$.}
			\label{subfig: poly with cactus}
		\end{subfigure}
		\hfill
		\begin{subfigure}[t]{0.3\textwidth}
			\centering
			\includegraphics[scale=0.8]{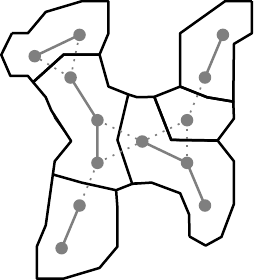}
			\caption{Partition of $G_P$ and the corresponding decomposition of $P$.}
			\label{subfig: cactus poly partition}
		\end{subfigure}
	\end{minipage}
	\caption{Reduction of the skeleton-based decomposition problem for simple polygons to a partition problem for cactus graphs. For simplicity, the skeleton is illustrated with a small sample of skeleton points.}
	\label{fig: polytocactus}
\end{figure}

We consider the following decomposition problem, which we call \emph{MinNum $(l,u)$-decomposition problem}: Given a simple polygon $P$ and size constraints $l$ and $u$ ($l\leq u$), compute a decomposition of $P$ into the minimum number of subpolygons such that the area of each subpolygon $a(P_i)$ lies between $l$ and $u$. We call a subpolygon \emph{feasible} if it fulfills $l\leq a(P_i)\leq u$. For the skeleton-based decomposition,  we can formalize the problem as finding a partition of the set $Z(P,S)$ into $p$ disjoint clusters $C_1,\ldots,C_p$ such that $p$ is minimal and each subpolygon created by a cluster $C_i$ is feasible.\par 
We can reduce the problem of finding a MinNum skeleton-based $(l,u)$-decomposition of a simple polygons to the problem of finding a MinNum $(l,u)$-partition of a weighted cactus graph (see Fig.~\ref{fig: polytocactus}). Given a simple polygon $P$, its skeleton $S$ and the set $Z(P,S)$, we create a weighted graph $G_P=(V,E,w)$ as follows: We have a vertex $i\in V$ for every subpolygon $P_i\in Z(P,S)$. The weight of each vertex $w(i)=a(P_i)$ equals the area of the subpolygon and we have an edge $(i,j)\in E$ if $P_i$ and $P_j$ are adjacent. 
Let $Q_s=\{P_i\vert s\in P_i\}$ be the set of all subpolygons containing the skeleton point $s$. We have $deg(s)=\vert Q_s\vert$. 
Note that the vertices $V(Q_s)=\{i\in V\vert P_i\in Q_s\}$ form a simple cycle in $G_P$ if and only if $deg(s)\geq 3$. As each subpolygon $P_i$ is generated by two skeleton points, each $P_i$ is contained in two sets $Q_s$ and $Q_t$. Thus, the vertex $i$ is contained in at most two cycles and $G_P$ is a cactus graph. Note that if the skeleton of the polygon $P$ has $n$ skeleton points, the cactus graph $G_P$ has $n-1$ vertices. A MinNum $(l,u)$-partition of $G_P$ consists of clusters $C_i$ of vertices such that the subpolygons created by the clusters are feasible. Thus, the graph partition can be transformed into a feasible solution for the MinNum skeleton-based $(l,u)$-decomposition problem. With our result from Section~\ref{sec: minimum and maximum partition}, we are able to obtain a feasible solution in polynomial runtime. Nevertheless, the runtime is quite expensive and therefore other methods, such as our branch-wise approach, might still be favorable in the practical application.

\section{Preliminaries}\label{sec: preliminiaries}
Let $G=(V,E)$ be a graph with $n$ vertices and $m$ edges. A \emph{tree} is a graph without any cycles. Given node $v$ in a rooted tree $T$, we denote the subtree rooted in $v$ by $T_v$. For the root node $r$, we have $T_r=T$. We define $T_v^i$ as the subtree induced by $v$ and its first $i$ children (see Fig.~\ref{fig:subtree}).
\begin{figure}[b]
	\centering
	\includegraphics[scale=0.95]{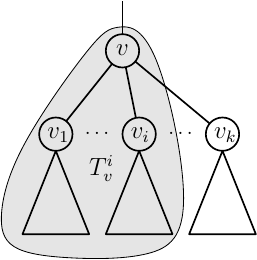}
	\caption{Subtree $T_v^i$ of a node $v$.}
	\label{fig:subtree}
\end{figure}
A \emph{cactus graph} is a graph in which two cycles have at most one vertex in common. Equivalently, these are the graphs in which every edge belongs to at most one cycle. The vertex set $V$ of a cactus graph can be partitioned into three subsets. A vertex is a \emph{C-vertex} if it has degree 2 and is included in exactly one cycle. It is a \emph{G-vertex} if it is not included in any cycle. All remaining vertices are \emph{H-vertices} and are also referred to as \emph{hinges}. Every H-vertex belongs to at least one cycle. 
Let $C= \langle w_1,w_2,\ldots,w_m \rangle$ be a cycle in the cactus graph. We define the \emph{tree representation} $T_G$ of a cactus graph $G$ as the graph such that:
\begin{itemize}
	\item all G- and H-vertices are represented with corresponding nodes and edges
	\item every cycle $C= \langle w_1,w_2,\ldots,w_m \rangle$ in $G$ is represented with a single cycle node $c$. 
	\item if a H-vertex $v$ belongs to a cycle $C$, there is an edge between the hinge node $v$ and the cycle node $c$. 
\end{itemize}
By replacing all the cycles in $G$ with cycle nodes, we obtain a tree structure for $T_G$.
An example of this is shown in Figure~\ref{fig:cactus tree representation}.
Let $C=\langle w_1,w_2,\ldots,w_m\rangle $ be a cycle that is represented by a node~$c$. For every $w_i\in C$ that is a H-vertex, there exists an edge to a hinge node corresponding to $w_i$, which we denote by $H_{w_i}$.
A \emph{rooted cactus graph} is a cactus graph in which one hinge vertex is selected as the root. In this case, $T_G$ is given as a rooted tree. In the following, when considering cactus graphs or their tree representations, we always refer to the rooted variant. Note that we refer to the elements in the cactus graph as vertices and to the elements of its tree representation as nodes.\par
\begin{figure}[t]
	\centering
	\begin{subfigure}[c]{0.50\textwidth}
		\includegraphics[scale=0.95]{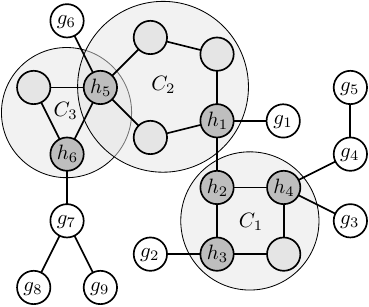}
	\end{subfigure}\hfill
	\begin{subfigure}[c]{0.47\textwidth}
		\includegraphics[scale=0.95]{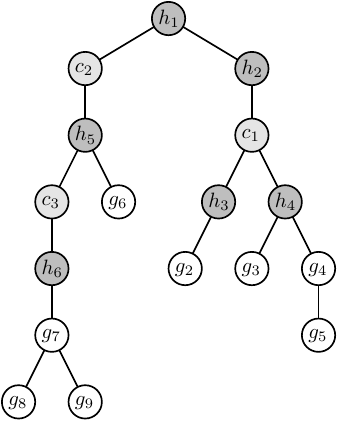}
	\end{subfigure}
	\caption{A cactus graph $G$ and its corresponding tree representation $T_G$. The H- and G-vertices are denoted with $h_i$ and $g_i$. The hinge $h_1$ is considered as the root of $G$ and therefore also the root of $T_G$. Every cycle $C_i$ is represented by some node $c_i$ in $T_G$.}
	\label{fig:cactus tree representation}
\end{figure}

Recall that a $p$-$(l,u)$-partition of a graph is a partition of the vertex set into exactly $p$ clusters such that the weight of each cluster lies between the two weight bounds $l$ and $u$. 
Let $P$ be a $(l,u)$-partition of a tree $T$. Given some node $v$, we denote the cluster that contains $v$ by $P_v$. If $l$ equals zero, $P$ induces a feasible $(0,u)$-partition for any subtree $T_v^i$. This is not necessarily true for a $(l,u)$-partition in general, as the weight of the cluster $P_v$ might be less than $l$. We call a partition of a subtree $T_v^i$ an \emph{extendable} $(l,u)$-partition if every cluster except for $P_v$ fulfills both weight constraint, meaning, $w(P_v)\leq u$ and $l\leq w(V_i)\leq u$ for all $V_i\neq P_v$. Such a partition can potentially be extended to become feasible by adding more nodes to the cluster $P_v$.\par 
Our partition algorithms operate on \emph{partition sets}. A partition set $S(v,i)$ is defined for a certain subtree $T_v^i$ in the given tree (representation). 
\begin{align}
\begin{split}
S(v,i) = \{(x,k)\ \vert\ &\exists\text{ extendable $(l,u)$-partition $P$ of $T_v^i$} \\
&\text{s.t. $\vert P\vert = k$ and $w(P_v)=x$}\}.
\end{split}
\end{align}
Every tuple $(x,k)\in S(v,i)$ corresponds to an extendable $(l,u)$-partition of the subtree $T_v^i$, which has size $k$ and in which the weight of the cluster containing the root node $v$ equals $x$. Let $c_v$ be the number of children of some node $v$. The partition set $S(v,c_v)$ is in short denoted by $S(v)$. 
\begin{remark}\label{rem: existence partition}
	A tree $T_v^i$ has a $p$-$(l,u)$-partition if and only if there exists a tuple $(x,p)\in S(v,i)$ such that $l\leq x\leq u$. 
\end{remark}

\section{General \texorpdfstring{$(l,u)$}{(l,u)}-partition problems}\label{sec: general l,u partition problems}

In this section, we consider the three general $(l,u)$-partition problems, which are finding a $(l,u)$-partition with a minimum, maximum or fixed number of clusters. 
\begin{problem}[Decision $p$-$(l,u)$-partition problem]
	Let $G=(V,E,w)$ be a graph with weights $w$ on its vertices. Given two non-negative integers $l$ and $u$ with $l\leq u$ and a positive integer $p\leq n$, is there a vertex partition into $p$ clusters $V_1, V_2, \ldots, V_p$ such that $l\leq w(V_i)\leq u$ for all $1\leq i\leq p$? 
\end{problem}
In the following, we assume that $w(v) \leq u$ for all $v\in V$. Otherwise, a $p$-$(l,u)$-partition would not exist. Ito et al.~\cite{Ito2012} presented a polynomial-time algorithm that solves the given problem if $G$ is a tree. Here, we show that a similar method can be used to solve the problem on cactus graphs as well. The general idea is to utilize the tree representation of a cactus graph and include an additional procedure that deals with cycles. 
First, we present a simple algorithm and then show how this algorithm can be adjusted to solve the given problem in polynomial time. In general, we consider only graphs with non-negative integer weights. However, the polynomial-time algorithm allows us to include real-valued weights as well.

\subsection{Simple algorithm}\label{sec: simple algorithm}
\subsubsection{Tree partition}
First, we present the algorithmic approach if a tree $T$ is given as the input. This corresponds to the method of Ito et al. adapted to our notation. The basic idea is to compute partitions for all nodes using a bottom-up and left-to-right approach. More specifically, we compute partitions for each subtree $T_v^i$ for increasing values of $i$ and for $v$ going from the leaves to the root. 
Let $v$ be a node in a tree $T$ and $v_i$ the $i$-th child of $v$. Note that we obtain a partition of $T_v^i$ by combining partitions of the subtrees $T_v^{i-1}$ and $T_{v_i}$. Given a partition $P'$ of $T_v^{i-1}$ and $P''$ of $T_{v_i}$, we can combine them to create different partitions $P$ of $T_v^i$ as follows (see Fig.~\ref{fig: partition possibilities}):
\begin{description}
	\item[(A)] Merge the two clusters containing $v$ ($P'_v \in P')$ and $v_i$ ($P''_{v_i}\in P'')$.
	\item[(B)] Join partitions without merging. 
\end{description}
In case (A), the resulting partition $P$ has size $\vert P'\vert + \vert P''\vert -1$ and the new cluster $P_v\in P$ has weight $w(P_v)=w(P'_v)+w(P''_{v_i})$. In case (B), $P$ has size $\vert P'\vert + \vert P''\vert$ and the new cluster $P_v\in P$ is equal to $P'_v$.
We can obtain all possible partitions of $T_v^i$ by combining all partitions of its subtrees in this manner.\par

\begin{figure}[t]
	\centering
	\begin{minipage}{0.8\textwidth}
		\begin{subfigure}[c]{\textwidth}
			\centering
			\includegraphics[scale=0.95]{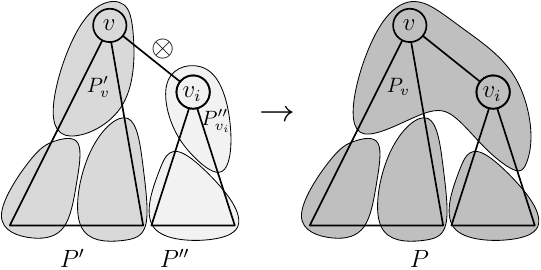}
		\end{subfigure}\\
		\begin{subfigure}[c]{\textwidth}
			\centering
			\includegraphics[scale=0.95]{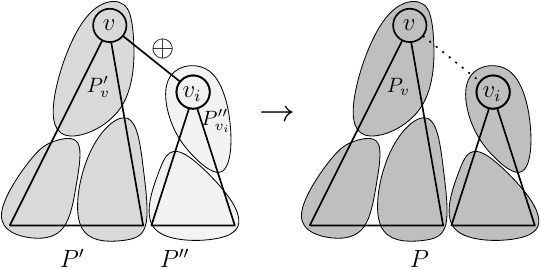}
		\end{subfigure}
	\end{minipage}
	\caption{Different possibilities to combine partitions $P'$ of $T_v^{i-1}$ and $P''$ of $T_{v_i}$ to obtain a partition $P$ of $T_v^i$. The top one corresponds to option (A) and the bottom one to option (B).}
	\label{fig: partition possibilities}
\end{figure}

For the given partition problem, we want to compute extendable $(l,u)$-partitions with at most $p$ clusters. When combining partitions, we eliminate all partitions whose size exceeds $p$. In case (A), we have to check that $P''_{v_i}$ fulfills the lower weight constraint because $P''$ has to be a feasible $(l,u)$-partition and not just an \emph{extendable} one. In case (B), we have to check that the weight of the new cluster $P_v$ does not exceed the upper weight bound $u$.\par 
This computation can be formalized using the partition sets defined in Section~\ref{sec: preliminiaries}.  We define two operations $\oplus$ and $\otimes$, which correspond to the cases (A) and (B). For two partition sets $S_1$ and $S_2$ the operations are defined as follows:
\begin{align}\label{eq: oplus otimes}
S_1 \oplus S_2 &= \{(x_1,k_1+k_2)\ \vert\ l\leq x_2, k_1+k_2\leq p, (x_1,k_1)\in S_1, (x_2,k_2)\in S_2\}\nonumber\\
S_1 \otimes S_2 &=\{(x_1+x_2,k_1+k_2-1)\ \vert\  x_1+x_2\leq u, k_1+k_2-1\leq p,\\
&\qquad (x_1,k_1)\in S_1, (x_2,k_2)\in S_2\}.\nonumber
\end{align}
The complete partition set of all feasible partitions created from $S_1$ and $S_2$ is given as the union of the results of both operations. We denote this operation with $\odot$:
\begin{equation*}\label{eq: odot}
S_1 \odot S_2 = (S_1 \oplus S_2) \cup (S_1 \otimes S_2).
\end{equation*}
Using these notations, we can compute partition sets for each $T_v^i$ using the following dynamic programming approach:
\begin{align*}
S(v,0) &= \{(w(v),1)\}\\
S(v,i) &= S(v,i-1)\odot S(v_i).
\end{align*}
These partition sets are computed for all nodes in the tree from the leaves to the root. If there exists a correct tuple $(x,p)$ in the partition set of the root node $S(r)$, we know if there exists a $p$-$(l,u)$-partition of the tree (see Remark~\ref{rem: existence partition}). Algorithm~\ref{alg: tree partition} solves the decision variant of the $p$-$(l,u)$-partition problem in this way in $\mathcal{O}(u^2p^2)$ time.

\begin{algorithm}[tb]
	\DontPrintSemicolon
	\KwIn{Tree $T$ with root $r$, integers $l,u,p$ with $0\leq l\leq u$ and $0<p$.}
	\KwOut{\texttt{yes} if there is a $p$-$(l,u)$-partition of $T$ and \texttt{no} otherwise.}
	\ForAll{$v\in V$ bottom-up}{
		$S(v,0) = \{(w(v),1)\}$\;
		\For{$1\leq i\leq c_v$}{
			$S(v,i) = S(v,i-1)\odot S(v_i)$}}
	\lIf{$(x,k)\in S(r)$ such that $l\leq x\leq u$ and $k=p$} 
	{\Return \texttt{yes}}\lElse{\Return \texttt{no}}
	\caption{Simple algorithm to decide the $p$-$(l,u)$-partition problem for a tree}
	\label{alg: tree partition}
\end{algorithm}

\subsubsection{Cactus graph partition} 
With some additional procedures, we can generalize the tree partition algorithm to solve the $p$-$(l,u)$-partition problem on cactus graphs (see Alg.~\ref{alg: cactus partition}). Let $G$ be a weighted cactus graph and $T$ its tree representation, which is rooted in some hinge $r$. We compute partition sets $S(v,i)$ for all nodes $v$ in $T$ corresponding to G- and H-vertices. $S(v,i)$ contains tuples belonging to partitions of the subgraph of $G$ that is represented with the subtree $T_v^i$. Again, the algorithm considers all nodes in a bottom-up and left-to-right manner. If $v$ and its $i$-th child $v_i$ are not cycle nodes, we can compute $S(v,i)=S(v,i-1)\odot S(v_i)$ as before. If $v_i$ is a cycle node, we execute an additional procedure \ref{alg: cyclePartition}($v_i,i$), which returns the partition set $S(v,i)$.

\begin{algorithm}[tb]
	\DontPrintSemicolon
	\KwIn{cactus graph $G$ rooted in hinge $r$, integers $l,u,p$ with $0\leq l\leq u$ and $0<p$.}
	\KwOut{\texttt{yes} if there is a $p$-$(l,u)$-partition of $G$ and \texttt{no} otherwise.}
	$T=(V',E')$ tree representation of $G$ rooted in $r$,\;
	$C'$ set of all cycle nodes in $V'$\;
	\ForAll{$v\in V'\setminus C'$ bottom-up}{
		$S(v,0) = \{(w(v),1)\}$\;
		\For{$1\leq i\leq c_v$}{
			\uIf{$v_i\in C'$}{
				$S(v,i) = \ref{alg: cyclePartition}(v_i,i)$\;}
			\Else{
				$S(v,i) = S(v,i-1)\odot S(v_i)$}}}
	\lIf{$(x,k)\in S(r)$ such that $l\leq x\leq u$ and $k=p$} 
	{\Return \texttt{yes}}\lElse{\Return \texttt{no}}
	\caption{Simple algorithm to decide the $p$-$(l,u)$-partition problem for a cactus graph}
	\label{alg: cactus partition}
\end{algorithm}

\paragraph{Cycle partition}
Let $C=\langle w_1,w_2,\ldots,w_m \rangle $ be a cycle in the cactus graph $G$ that is represented by some cycle node~$c$ in its tree representation $T$.
In general, one way of finding all partitions of a cycle is to decompose the cycle into different paths and then compute partition for these paths. A cycle of length $m$ can be decomposed into $m$ different paths by removing one edge each time. In our case, we consider these paths as small trees that are rooted in $w_1$ (see Fig.~\ref{fig: cycle partitions}). We call these the different \emph{configurations} of a cycle and number them from $1$ to $m$. In configuration $j$ the left branch of $w_1$ consists of the vertices $w_2$ to $w_j$ (empty if $j=1$) and the right branch of $w_m$ to $w_{j+1}$ (empty if $j=m$). Thus, we can find all partitions such that $w_j$ and $w_{j+1}$ (with $w_{m+1}=w_1$) are not in the same cluster - unless this cluster contains all nodes of the cycle. We can show that only $m-1$ configurations are needed to obtain all partitions of a cycle. Thus, we only consider the first $m-1$ configurations from now on.\par 
\begin{figure}[t]
	\centering
	\includegraphics[scale=1]{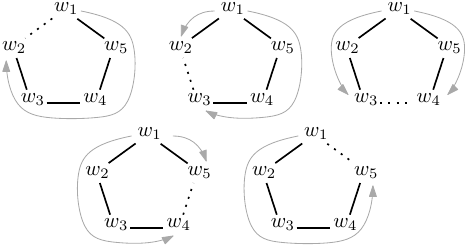}
	\caption{Configurations with respect to a root $w_1$ in a cycle of length five.}
	\label{fig: cycle partitions}
\end{figure}
\begin{lemma}
	Let $C= \langle w_1,w_2,\ldots,w_m \rangle $ be a cycle in a cactus graph $G$. 
	Given a partition $P$ of $C$, this partition can be found in one of the $m-1$ configurations of $C$. 
\end{lemma}
\begin{proof}
	We consider the configurations of $C$ with respect to a root vertex $w_1$. 
	Let $P_{w_1}$ be the cluster in the partition $P$ that contains $w_1$. Assume that $P_{w_1}=\{w_1\}\cup P_L \cup P_R$ with $P_L=\{w_2,\ldots,w_i\}$ and $P_R=\{w_j, \ldots, w_m\}$ and $1\leq i <j\leq m+1$. If $i=1$ or $j=m+1$, then $P_L$ or respectively $P_R$ is considered to be empty. We show that for each possible values of $i$ and $j$ exists a configuration unequal to $m$ in which this cluster can be found.\\
	If $j=i+1$, $P_{w_1}$ contains all nodes of the cycle and the partition can be found in any configuration. If $j\neq i+1$, the partition contains additional clusters and can be found if all remaining vertices ($\notin P_{w_1}$) are contained in the same (either left or right) subtree of $w_1$. This is the case in two configurations, namely, $i$ and $j-1$.
	If we consider configuration $j-1=m$, then there exists another configuration $i$ in which the partition can be found. Because we assumed $j\neq i+1$, we have $i<j-1=m$. Thus, this configuration is unequal $m$.
	 \end{proof} 
\bigskip

\begin{figure}
	\centering
	\begin{subfigure}[b]{0.45\textwidth}
		\centering
		\includegraphics[scale=1]{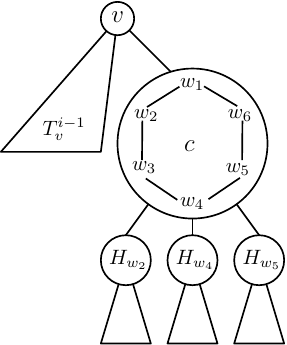}
		\subcaption{}
	\end{subfigure}\hspace{1cm}
	\begin{subfigure}[b]{0.45\textwidth}
		\centering
		\includegraphics[scale=1]{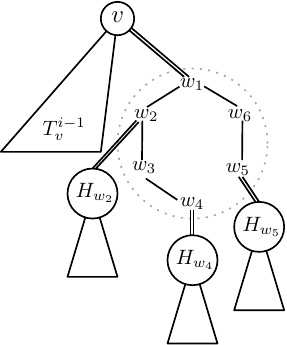}
		\subcaption{}
	\end{subfigure}
	\caption{(a): Example of subtree $T_v^i$ of a hinge $v=H_{w_1}$ with the $i$-th child being a cycle node $c$ corresponding to a cycle $C= \langle w_1,w_2,\ldots, w_6 \rangle $ in the cactus graph. (b): Subgraph considered in configuration 4 of the cycle. The subgraph in the dotted circle is $T(C,4)$. The double edges indicate that the two connected vertices/nodes are the same.}
	\label{fig: tree config}
\end{figure}

To compute partitions of the cycle $C=\langle w_1,w_2,\ldots,w_m \rangle $ represented by a cycle node $c$, we utilize this idea of considering different configurations. 
Let $v_c$ be a node in $T$ with the cycle node $c$ as its $i_c$-th child. When computing the partition set $S(v_c,i_c)$ the procedure \ref{alg: cyclePartition}($c,i_c$) is executed. For every H-vertex $v$ in $C$, we denote its corresponding hinge node in $T$ 
by $H_v$. Let $w_1$ be the vertex such that $H_{w_1}=v_c=pred(c)$ i.e. the parent of $c$ in $T$. For all other H-vertices~$v$, $H_v$ is a child of $c$.\par

In configuration $j$ of the cycle $C$, the edge $(w_j,w_{j+1})$ is deleted. We consider the remaining graph as a tree rooted in $w_1$, which we denote by $T(C,j)$ (see Figure~\ref{fig: tree config}). For each configuration~$j$, we compute individual partition sets $S_j$. Let $b_v$ be the number of children of the vertex $v$ in $T(C,j)$. In configuration $j$, we have:
\begin{align*}
b_{v} = \begin{cases}
0 & \mbox{if $v=w_j$ and $j>1$ or $v=w_{j+1}$, }\\
1 & \mbox{if $v=w_1$ and $j=1$},\\
2 & \mbox{if $v=w_1$ and $j\neq 1$},\\
1 & \mbox{otherwise.}
\end{cases}
\end{align*}
For a vertex $v$ and $0<i\leq b_v$, we compute $S_j(v,i)$ as in the tree partition algorithm:
\begin{align*}
S_j(v,i) = S_j(v,i-1)\odot S_j(v_i).
\end{align*}
However, the partition sets $S_j(v,0)$ have to be initialized differently to take previously computed partitions into account. 
The initialization is as follows:
\begin{align}
S_j(v,0) = \begin{cases}
\{(w(v),0)\} & \mbox{if $v$ C-vertex},\\
S(H_{v}) & \mbox{if $v\neq w_1$ H-vertex},\\
S(H_{v},i_c-1) & \mbox{if $v=w_1$ H-vertex}.
\end{cases}
\end{align}
After the partition sets for all configurations are computed, we compute a final partition set $S_c$ as the union of the partition sets $S_j(w_1)$.
\begin{equation*}
S_c= \bigcup_{j=1}^{m-1} S_j(w_1).    
\end{equation*}
This partition set is returned by the function and matches $S(pred(c),i_c)=S(v_c,i_c)$.

\begin{function}[tb]
	\DontPrintSemicolon
	\KwIn{Cycle node $c$ representing the cycle $C=\langle w_1,w_2,\ldots, w_{m}\rangle$, integer $i_c$}
	\For{$j=1$ \KwTo $m-1$}{
		Let $T(C,j)$ be the tree rooted in $w_1$ corresponding to the cycle $C$ in configuration $j$ and $b_v$ the number of children of $v$ in $T(C,j)$\;
		\ForAll{$v$ bottom-up}{
			\uIf{$v=w_1$}{
				$S_j(v,0) = S(pred(c),i_c-1)$\;
			}\uElseIf{$w_i$ H-vertex corresponding to a child node $H_{w_1}$}{
				$S_j(v,0) = S(H_{v})$\;
			}\Else{
				$S_j(v,0) = \{(w(v),1)\}$\;
			}
			\For{$1\leq i\leq b_v$}{
				$S_j(v,i) = S_j(v,i-1)\odot S_j(v_i)$
			}
		}
	}
	$S_c=\bigcup_{j=1}^{m-1} S_j(w_1)$\;
	\Return $S_c$
	\caption{CyclePartition($c$,$i_c$)}
	\label{alg: cyclePartition}
\end{function}	




\paragraph{Runtime and correctness}
Let $G$ be a cactus graph with tree representation $T$. We denote the part of $G$ that is represented by some subtree $T_v^i$ by $G(T_v^i)$. Thus, $G=G(T)$. Let $c$ be a cycle node that is contained in the subtree $T_v^i$ and let $C$ be the cycle in $G$ that is represented by $c$. If we consider $C$ in its configuration $j$, we denote the corresponding subgraph by $G(T_v^i,c,j)$. Depending on the configuration, the number of children for the vertices inside the circle changes. For a vertex $v'\in C$, let $c_{v'}$ be the number of children of $v'$ in $G(T_v^i,c,j)$. We have $c_{v'}=a_{v'}+b_{v'}$ with $b_{v'}$ being the number of children contained in $C$ and $a_{v'}$ the number of children of $v$, which are not contained in $C$. Note that it is always the first $a_{v'}$ children of $v'$ that are not contained in $C$, and we have $b_{v'}\in\{0,1,2\}$. We denote the subgraph of $G(T_v^i,c,j)$ induced by $v'$ and its first $i'$ children by $G_{v'}^{i'}(T_v^i,c,j)$.
\begin{lemma}\label{lem: partition iff tuple}
	Let $G$ be a cactus graph with tree representation $T$. For every partition set computed in Algorithm~\ref{alg: cactus partition} the following holds:
	\begin{itemize}
		\item A partition set $S(v,i)$ computed in Algorithm~\ref{alg: cactus partition} contains a tuple $(x,k)$ if and only if there exists an extendable $(l,u)$-partition $P$ of $G(T_v^i)$ with $w(P_v)=x$ and $\vert P\vert=k\leq p$.
		\item A partition set $S_j(v,i)$ computed in the execution of $S(v_c,i_c)=$ \ref{alg: cyclePartition}$(c,i_c)$ in Algorithm~\ref{alg: cactus partition} contains a tuple $(x,k)$ if and only if there exists an extendable $(l,u)$-partition $P$ of $G_{v}^{a_{v}+i}(T_{v_c}^{i_c},c,j)$ with $w(P_v)=x$ and $\vert P\vert=k\leq p$.
	\end{itemize}
\end{lemma}
\begin{proof}
	We show the statement of this lemma inductively. When considering a partition set $S(v,i)$ or $S_j(v,i)$, our hypothesis is that statement is true for every partition set that was computed with Algorithm~\ref{alg: cactus partition} previous to the considered set. We start with the base case $S(v,0)$. $G(T_v^0)$ consists only of the vertex $v$, therefore we have $w(P_v)=w(v)$ and $\vert P\vert=1$. The partition set $S(v,0)$ contains only the tuple $(w(v),1)$. Thus, the statement is true. In the following, we show that the statement is true for all partition set that are computed in later stages of the algorithm. The proof is structured into the three different computations that take place in the algorithm.
	\begin{description}
		\item[1)]$S(v,i) = S(v,i-1)\odot S(v_i)$.
	\end{description}
	Consider a partition set $S(v,i)$ that is computed with the formula $S(v,i-1)\odot S(v_i)$ where $v_i$ is the $i$-th child of $v$ in $T$. For the first direction, let $P$ be an extendable $(l,u)$-partition of $G(T_v^i)$ with $w(P_v)=x$ and $\vert P\vert = k\leq p$. Let $P'$ and $P''$ be the restriction of $P$ to $G(T_v^{i-1})$ and $G(T_{v_i})$ respectively. We consider two cases:\\
	Case 1: $v_i \notin P_v$. $P'$ is an extendable $(l,u)$-partition of $G(T_v^{i-1})$ with $w(P'_v)=w(P_v)=x$ and $\vert P'\vert =k'<k$. $P''$ is a $(l,u)$-partition of $G(T_{v_i})$ with $\vert P''\vert = k-k'$ and $w(P''_v)=x''$ with $l\leq x''\leq u$. With our induction hypothesis, we have $(x,k')\in S(v_i-1)$ and $(x'',k-k')\in S(v_i)$. When these tuples are combined with the $\oplus$-operation, we obtain $(x,k'+k-k')=(x,k)\in S(v,i)$.\\
	Case 2: $v_i \in P_v$. $P'$ and $P''$ are both extendable $(l,u)$-partitions with $\vert P'\vert = k'<k$, $\vert P''\vert=k-k'+1$, $w(P'_v)=x'<x$ and $w(P''_{v_i})=x-x'$. With our induction hypothesis, we have $(x',k')\in S(v,i-1)$ and $(x-x',k-k'+1)\in S(v_i)$. These tuples are combined with the $\otimes$-operation and we obtain $(x'+x-x',k'+k-k+1-1)=(x,k)\in S(v,i)$.\\
	When considering the other direction, let $(x,k)$ be a tuple contained in $S(v,i)$. We know that $(x,k)$ was computed by combining two tuples $(x_1,k_1)\in S(v,i-1)$ and $(x_2,k_2)\in S(v_i)$. With our induction hypothesis these tuples correspond to extendable $(l,u)$-partitions $P'$ of $G(T_v^{i-1})$ and $P''$ of $G(T_{v_i})$ respectively. Again, we consider two cases:\\
	Case 1: $(x,k)$ was computed with the $\oplus$-operation. Therefore, we have $x=x_1\leq u$ and $k=k_1+k_2\leq p$ and $l\leq x_2$. We can combine all clusters of $P'$ and $P''$ to create a partition $P=P'\cup P''$ and have $w(P_v)=w(P'_v)=x$ and $\vert P\vert=k_1+k_2=k$. Because the cluster $P_{v_i}$ fulfills the lower weight constraint $l$, $P$ is an extendable $(l,u)$-partition.\\
	Case 2: $(x,k)$ was computed with the $\otimes$-operation. Therefore, we have $x=x_1+x_2\leq u$ and $k=k_1+k_2-1\leq p$. We can combine $P'$ and $P''$ as follows: $P=(P'\setminus P'_v)\cup (P''\setminus P''_{v_i})\cup \{P'_v\cup P''_{v_i}\}$. Then, we have $w(P_v)=w(P'_v)+w(P''_{v_i})=x_1+x_2=x$ and $\vert P\vert=\vert P'\vert -1 + \vert P''\vert -1 +1 =k_1+k_2+1=k$. $P$ is obviously a $(l,u)$-partition.
	\begin{description}
		\item[2)]$S_j(v,i)$.
	\end{description}
	Consider a partition set $S_j(v,i)$ that is computed when executing $S(v_c,i_c)=\ref{alg: cyclePartition}(c,i_c)$. 
	First, we consider the case that $i=0$. We have the following property:
	\begin{align*}
	G_v^{a_v+0}(T_{v_c}^{i_c},c,j) = \begin{cases}
	(\{v\},\emptyset) &\mbox{ if $v$ is C-vertex (a)}\\
	G(T_{v_c}^{i_c-1}) &\mbox{ if $v$ H-vertex with $H_v = v_c$ (b)}\\
	G(T_{H_v}) &\mbox{ otherwise (c)}
	\end{cases}
	\end{align*}
	Case (a) represents the base case we considered above. There is only one partition $P$ of the graph consisting only of the vertex $v$ and we have $w(P_v)=w(v)$ and $\vert P\vert=1$. Since we have $S_j(v,0)=\{(w(v),1)\}$, the statement is true. For the other two cases, we again consider both directions.
	For the first direction, let $P$ be an extendable $(l,u)$-partition of $G_v^{a_v+i}(T_{v_c}^{i_c},c,j)$ with $w(P_v)=x$ and $\vert P\vert = k\leq p$. In case (b), $P$ is a partition of $G(T_{H_{v_c}}^{i_c-1})$. With our induction hypothesis, we have a corresponding tuple $(x,k)\in S(v_c,i_c-1)=S_j(v,0)$. In case (c), $P$ is a partition of $G(T_{H_{v}})$. With our induction hypothesis, we have a corresponding tuple $(x,k)\in S(H_v)=S_j(v,0)$.\\
	For the other direction, let $(x,k)$ be a tuple in $S_j(v,0)$. Therefore, we have the same tuple $(x,k)$ contained in $S(v_c,i_c-1)$ or $S(H_v)$ respectively. Because of our induction hypothesis, the tuple corresponds to an extendable $(l,u)$-partition of the respective graph $G_v^{a_v}(T_{v_c}^{i_c},c,j)$.\\
	If we consider the case $i\neq 0$, we can prove the statement with the same reasoning presented in the paragraph 1) of this proof. Thus, the statement is true for all computed partition sets $S_j(v,i)$.
	\begin{description}
		\item[3)] $S(v,i)=\ref{alg: cyclePartition}(v_i,i)$.
	\end{description}
	Consider a partition set $S(v,i)$ that is computed as $S(v,i)=$ \ref{alg: cyclePartition}$(v_i,i)$ with $v_i=c$ being a cycle node. For the first direction, let $P$ be an extendable $(l,u)$-partition of $G(T_v^i)$ with $w(P_v)=x$ and $\vert P\vert = k\leq p$. Let $C=(w_1,w_2,\ldots,w_m)$ be the cycle represented by the node $c$ such that $H_{w_1}=v$. Assume, there are two vertices $w_j$ and $w_{j+1}$ of $C$ that are not contained in the same cluster in $P$. Then, $P$ is also a partition of the graph $G(T_v^i,c,j)$. Note that $G(T_v^i,c,j)$ equals $G_{w_1}^{c_{w_1}}(T_v^i,c,j)$ for $c_{w_1}$ being the number of children of $w_1$ in $G(T_v^i,c,j)$. We have $c_{w_1}=a_{w_1}+b_{w_1}$ for some value $i'$. With our induction hypothesis, we have a tuple $(x,k)$ corresponding to $P$, which is contained in the partition set $S_j(w_1)=S_j(w_1,b_{w_1})$. Since we have $S(v,i)=\bigcup_{j=1}^{m-1}S_j(w_1)$, the tuple $(x,k)$ is also contained in $S(v,i)$. If all vertices of the cycle are contained in one cluster in $P$, then deleting one edge from the cycle does not influence the connectivity and the partition $P$ is also a partition of $G(T_v^i,c,j)$ for any of the $m-1$ configurations of $C$. Similarly, we have $(x,k)\in S(v,i)$.\\
	For the other direction, let $(x,k)$ be a tuple contained in $S(v,i)$. Since $S(v,i)=\bigcup_{j=1}^{m-1}S_j(w_1)$, there exists one set $S_j(w_1)=S_j(w_1,b_{w_1})$ in which $(x,k)$ is also contained. With our induction hypothesis follows that there exists an extendable $(l,u)$-partition $P$ of $G_{w_1}^{a_{w_1}+b_{w_1}}(T_v^i,c,j)=G(T_v^i,c,j)$. This partition is obviously also a partition of $G(T_v^i)$. Thus, the statement of the Lemma is true for all computed partition sets. 
	 \end{proof}
\medskip
\begin{theorem}
	Let $G$ be a weighted cactus graph and $T$ its tree representation. Given a positive integer $p$ and two non-negative integers $l$ and $u$ with $l\leq u$, Algorithm~\ref{alg: cactus partition} solves the decision $p$-$(l,u)$-partition problem in time $\mathcal{O}(u^2p^2n^2)$.
\end{theorem}
\begin{proof}
	The correctness of the algorithm follows from Lemma~\ref{lem: partition iff tuple}. 
	Let $r$ be the root of the tree representation $T$. Lemma~\ref{lem: partition iff tuple} implies that there exists a $p$-$(l,u)$-partition of the cactus graph $G=G(T_r)$ if and only if there exists a tuple $(x,p)\in S(r)$ such that $l\leq x\leq u$.\\
	The runtime results from the following observations. Every partition set that is computed with the algorithm consists of tuples $(x,k)$ with $x\leq u$ and $k\leq p$. Thus, these sets are of size $\mathcal{O}(up)$. Since the $\oplus$- and $\otimes$-operations combine all elements of two partition sets with each other, one $\odot$-operation takes $\mathcal{O}(u^2p^2)$ time. 
	The $\odot$-operation is always applied to compute a partition set for some node $v$ and some child node $v_i$. We say the $\odot$-operation is performed on the edge $(v,v_i)$. Note that we only perform $\odot$-operations on edges $(v,v_i)$ in the edge set of the cactus graph $G$. If both $v$ and $v_i$ are not C-vertices, there is an edge $(v,v_i)$ in the tree representation of $G$ and we perform exactly one $\odot$-operation for this edge. If $v$ and/or $v_i$ is a C-vertex, we perform at most one $\odot$-operation on the edge $(v,v_i)$ for every configuration $j$ of the cycle during the execution of the \ref{alg: cyclePartition} procedure. Since we consider $m-1$ configurations for a cycle of length $m$, we perform $\mathcal{O}(m)=\mathcal{O}(n)$ $\odot$-operations for each edge. Since there are $\mathcal{O}(n)$ edges in a cactus graph, there are $\mathcal{O}(n^2)$ $\odot$-operations. Moreover, it takes $\mathcal{O}(up)$ time to check if there is a correct tuple in $S(r)$ for $r$ being the root of the tree representation and thus decide if a $p$-$(l,u)$-partition exists. This results in an overall runtime of $\mathcal{O}(u^2p^2n^2)$.\par 
	
	 \end{proof}

\subsection{Polynomial-time algorithm}\label{sec: poly time algorithm}
The runtimes of Algorithms \ref{alg: tree partition} and \ref{alg: cactus partition} are pseudopolynomial because they depend on $u$, which is the upper weight bound. The runtime is dominated by the time it takes to perform $\odot$-operations, which in turn depends on the size of the partition sets. Until now, the computed partition sets are of size $\mathcal{O}(up)$, because we store tuples $(x,k)$, where $x$ ranges from 1 to $u$ and $k$ from 1 to $p$. Similarly to Ito et al., we can use intervals to reduce the size of these sets and thereby obtain a polynomial-time algorithm. Specifically, we reduce the size of the partition sets to $\mathcal{O}(p^2)$ and thus decrease the runtime of Algorithm \ref{alg: cactus partition} to $\mathcal{O}(p^4n^2)$. As this reduction follows the idea of Ito et al., we only present the general outline and show that it can be extended from trees to cactus graphs. We refer to their work for missing proofs.\par 
The general idea is to store the computed weights in the tuples not as individual weights $x$ but as intervals of weights $[a,b]$. Each interval is the \emph{interval of a maximal d-consecutive subset} of weights. 
\begin{definition}
	Let $A$ be an ordered set of integers. $A$ is called \emph{$d$-consecutive} if the difference between any two consecutive elements in $A$ is at most $d$. That is, for each $a\in A\setminus \{\max(A)\}$ there exists $a'\in A$ such that $0\leq a'-a\leq d$.
\end{definition}
\begin{definition}
	Given a set $A$ and a subset $A'\subset A$, we call $A'$ a \emph{maximal} $d$-consecutive subset of $A$ if $A'$ is $d$-consecutive and there is no other $d$-consecutive subset in $A$ that contains $A'$.
\end{definition}
\begin{definition}
	Given an ordered set $A$, we call $\left[\min(A),\max(A)\right]$ the \emph{interval} of $A$.
\end{definition}
We denote the set containing all intervals of maximal $d$-consecutive subsets of a set $A$ as $I(A)$ (see Fig.~\ref{fig: interval}):
\begin{equation*}
I(A)=\{[a,a']\ \vert \ [a,a'] \text{ is the interval of a maximal $d$-consecutive subset of $A$}\}.
\end{equation*}
Let $I$ be a set of $d$-consecutive intervals. To obtain a set that only contains maximal  $d$-consecutive intervals, we have to merge all $d$-interfering intervals in $I$.
\begin{definition}
	Let $\left[a,a'\right]$ and $\left[b,b'\right]$ be two intervals such that $a\leq b$. We say $\left[ a,a'\right]$ is \emph{d-interfering} with $\left[b,b'\right]$ if $b-a'\leq d$.
\end{definition}
For two $d$-interfering intervals, we define a merge-operation $\uplus$ as follows:
\begin{equation*}
\left[a,a'\right]\uplus \left[b,b'\right] = \left[a,\max{a',b'}\right].
\end{equation*}
If both intervals are $d$-consecutive, the merged interval is $d$-consecutive as well. Given a set $I$ of intervals, we define $M(I)$ as the set of intervals we obtain by repeatedly merging all $d$-interfering intervals in $I$ until none remain (see Fig.~\ref{fig: merged interval}). Note that the set $M(I)$ is unique, does not depend on the merging order and can be computed with a runtime that is linear in the size of set $I$.\par 

\begin{figure}
	\centering
	\includegraphics[scale=0.85]{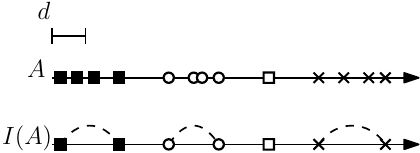}
	\caption{Interval set $I(A)$ consisting of all maximal $d$-consecutive subsets of an integer set~$A$.}
	\label{fig: interval}
\end{figure}

\begin{figure}
	\centering
	\includegraphics[scale=0.85]{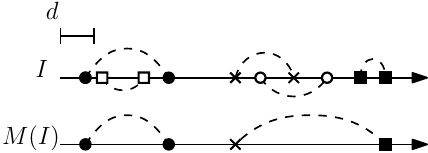}
	\caption{Merged interval set $M(I)$ obtained by merging all $d$-interfering intervals in the interval set $I$.}
	\label{fig: merged interval}
\end{figure}

The partition sets $S(v,i)$ computed in Algorithm \ref{alg: cactus partition} contain tuples $(x,k)$ corresponding to partitions of the subgraph $G(T_v^i)$ that is represented by the subtree $T_v^i$ in the tree representation of~$G$. We define
\begin{equation*}
S(v,i,k) = \{x\ \vert \ (x,k) \in S(v,i)\}.    
\end{equation*}
We substitute the tuples $(x,k)$ with tuples of the form $(\left[a,b\right],k)$ such that $\left[a,b\right]$ is the interval of a maximal $d$-consecutive subset of $S(v,i,k)$, i.e. the elements of $I(S(v,i,k))$.  The value $d$ is set as the difference between the given upper and lower weight bound, thus, $d=u-l$. We define \emph{interval partition sets} $I(S(v,i))$ as follows:
\begin{equation}\label{eq: interval partition set}
I(S(v,i)) = \{(\left[a,b\right],k)\ \vert \ \left[a,b\right] \in I(S(v,i,k))\}.
\end{equation}
For the partition sets $S_j$, which are used in \ref{alg: cyclePartition} and correspond to the subgraph created by a certain configuration $j$ of a cycle in $G$, this is defined analogously. 
\begin{lemma}
	For each $(x,k)\in S(v,i)$, we have $([a,a'],k)\in I(v,i)$ such that $a\leq x\leq a'$.
\end{lemma}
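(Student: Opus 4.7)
The plan is to prove this by structural induction on the subtree $T_v^i$, exactly paralleling the recursive construction of the sets $I(T_v^i)$ in Algorithm~\ref{alg:cactuspoly}. The induction hypothesis will be: for every tuple $(x,k)\in S(T_v^i)$ there is a tuple $([a,a'],k)\in I(T_v^i)$ with $a\le x\le a'$.

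Base case ($i=0$): Here $T_v^0$ consists only of $v$, so $S(T_v^0)=\{(w(v),1)\}$ and $I(T_v^0)=\{([w(v),w(v)],1)\}$, and the containment $w(v)\in[w(v),w(v)]$ is immediate.

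Inductive step for a non-cycle edge $(v,v_i)$: any $(x,k)\in S(T_v^i)$ comes from the $\oplus$-operation of \eqref{eq:oplus}, so it arises from tuples $(x_1,k_1)\in S(T_v^{i-1})$ and $(x_2,k_2)\in S(T_{v_i})$ either by merging (case (a): $x=x_1+x_2$, $k=k_1+k_2-1$, $x\le u$) or without merging (case (b): $x=x_1$, $k=k_1+k_2$, $l\le x_2\le u$). By the induction hypothesis there exist $([a_1,a_1'],k_1)\in I(T_v^{i-1})$ and $([b,b'],k_2)\in I(T_{v_i})$ with $a_1\le x_1\le a_1'$ and $b\le x_2\le b'$. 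In case (b), since $x_2\in[b,b']\cap[l,u]$ this intersection is nonempty, so the definition of $\oplus$ on intervals in \eqref{eq:polyoplus} places $([a_1,a_1'],k_1+k_2)$ into $I'(T_v^i)$, which contains $x=x_1$. In case (a), the condition $a_1+b\le x_1+x_2\le u$ triggers the second clause of \eqref{eq:polyoplus}, yielding $([a_1+b,a_1'+b'],k-1+1-1)$ wait, the new size is $k_1+k_2-1=k$, and the interval contains $x_1+x_2=x$. In either case the desired containment holds for $I'(T_v^i)$. Finally, passing from $I'(T_v^i)$ to $I(T_v^i)=M(I'(T_v^i))$ only enlarges intervals (the $\uplus$-operation produces $[a,\max\{a',b'\}]$ which contains both merged intervals), so containment is preserved. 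Therefore the claim holds for $I(T_v^i)$.

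Inductive step for a cycle edge: if $v$ is the start node $w_0$ of a cycle $C(w_0,w_{m-1})$ and $v_i=w_1$, then by Lemma~\ref{lem:config} the partition witnessing $(x,k)\in S(T_v^i)$ is realized in some configuration $j\in\{1,\ldots,m-1\}$, hence $(x,k)\in S_j(T_{w_0}^k)$. One then applies the same argument to the recursion \eqref{eq:Sj-m}--\eqref{eq:Sj-0} on $S_j$, matched with \eqref{eq:polySj-m}--\eqref{eq:polySj-0} on $I_j$: each equation is just a sequence of $\oplus$-operations (and one $M$-reduction in \eqref{eq:polySj-0}), so the same case analysis on merging versus not merging propagates containment through the configuration computation, and the union over $j$ and the final merge $M$ preserve it as well.

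The only nontrivial step is reconciling the side conditions of the two $\oplus$-definitions in the non-merge case: \eqref{eq:oplus} requires $l\le x_2$ while \eqref{eq:polyoplus} requires $[b,b']\cap[l,u]\ne\emptyset$. This is handled by noting that $x_2\le u$ holds automatically because $(x_2,k_2)\in S(T_{v_i})$ records a feasible cluster weight, so $x_2\in[l,u]\cap[b,b']$. Aside from this bookkeeping, the proof is a routine structural induction: the main obstacle is really just being careful that every recursive branch of the algorithm is accounted for and that the merging step $M$ is shown to preserve (indeed, expand) the intervals.
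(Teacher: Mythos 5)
Your proof is correct and follows essentially the intended argument: the paper itself omits this proof (deferring to Ito et al.), and the expected justification is exactly your structural induction along the dynamic-programming recursion, checking that both clauses of the interval $\oplus$-operation in \eqref{eq:polyoplus} fire whenever the corresponding clauses of \eqref{eq:oplus} do, and that the merge operation $M$ only enlarges intervals (also through the configuration recursion \eqref{eq:polySj-m}--\eqref{eq:polySj-0}). Aside from minor polish issues (the mid-sentence self-correction, and the notation clash between the tuple size $k$ and the child index $k$ in $T_{w_0}^k$), nothing substantive is missing.
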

\begin{lemma}
	For each $([a,a'],k)\in I(v,i)$, we have both $(a,k)$ and $(a',k)$ in $S(v,i)$ and the set $\{x\in S(v,i,k)\ \vert \ a\leq x\leq a'\}$ is $d$-consecutive.
\end{lemma}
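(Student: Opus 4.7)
My plan is to prove the claim by induction on the bottom-up order in which Algorithm~\ref{alg:cactuspoly} computes the sets $I(T_v^i)$ (and the auxiliary $I_j(T_v^i)$ for cycle nodes), mirroring the definitions of $\oplus$ in~(\ref{eq:polyoplus}) and of the merge operator $M$. The base case would be immediate from $I(T_v^0)=\{([w(v),w(v)],1)\}$ and $S(T_v^0,1)=\{w(v)\}$, so I would focus on the inductive step.

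For a non-cycle edge, I would first verify the two properties for every tuple $([c,c'],k)$ in the pre-merge set $I'(T_v^i)=I(T_v^{i-1})\oplus I(T_{v_i})$, and then argue that $M$ preserves them. For the first branch of~(\ref{eq:polyoplus}), where $([c,c'],k)=([a,a'],k_1+k_2)$ with $([b,b'],k_2)\in I(T_{v_i})$ and $[b,b']\cap[l,u]\neq\emptyset$, I would use the IH together with $d=u-l$ to walk along the $d$-consecutive set $\{y\in S(T_{v_i},k_2): b\le y\le b'\}$ and extract some $y\in S(T_{v_i},k_2)\cap[l,u]$; the original $\oplus$-operation~(\ref{eq:oplus}) then lifts every $x\in S(T_v^{i-1},k_1)\cap[a,a']$ to a pair $(x,k)\in S(T_v^i)$, so the IH on $I(T_v^{i-1})$ furnishes the endpoints and the $d$-consecutive structure. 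For the second branch, where $([c,c'],k)=([a+b,a'+b'],k_1+k_2-1)$, I would use a grid-path argument over the Cartesian product $(S(T_v^{i-1},k_1)\cap[a,a'])\times(S(T_{v_i},k_2)\cap[b,b'])$: traversing monotonically from $(a,b)$ to $(a',b')$ one coordinate at a time, each jump has size at most $d$ by IH, so the resulting sequence of sums in $S(T_v^i,k)$ is $d$-consecutive and stretches from $a+b$ to $a'+b'$ (or to its truncation at $u$, if that bound is tighter).

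To handle $M$, it suffices to show that a single $\uplus$-step preserves both properties. If $[c_1,c_1']$ and $[c_2,c_2']$ are merged with $c_2-c_1'\le d$, then $c_1,c_2'$ already lie in $S(T_v^i,k)$ by the pre-merge argument, and concatenating the two $d$-consecutive subsequences across a gap of size at most $d$ again yields a $d$-consecutive set: any further elements of $S(T_v^i,k)$ falling in $[c_1,c_2']$ can only shrink gaps, never enlarge them. For a cycle edge, the same scheme applies configuration by configuration to each of~(\ref{eq:polySj-m})--(\ref{eq:polySj-0}) (each a chain of $\oplus$'s with a single $M$ in~(\ref{eq:polySj-0})), and the final union $I(T_{w_0}^k)=\bigcup_j I_j(T_{w_0}^k)$ is closed under a further $M$ at the parent level.

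The main technical obstacle will be the cluster-merging branch of $\oplus$: the sole condition $a+b\le u$ in~(\ref{eq:polyoplus}) does not guarantee $a'+b'\le u$, so the grid-path argument must implicitly cap the upper endpoint at $\min(a'+b',u)$ and verify that the monotone traversal still reaches this truncated endpoint in $S(T_v^i,k)$ without ever leaving a gap larger than $d$. Once this case is settled, the remainder of the argument is a routine application of the inductive hypothesis together with the previous lemma.
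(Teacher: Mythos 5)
Your overall skeleton is reasonable: the paper itself gives no proof of this lemma (it is imported from Ito et al.\ with the remark ``see Ito et al.'s work for missing proofs''), and an induction that follows the bottom-up computation, checks the two branches of the interval $\oplus$-operation \eqref{eq:polyoplus} against the exact operation \eqref{eq:oplus}, and then shows that the merge operator $M$ and the union over cycle configurations preserve both properties is the natural way to supply one. Your first-branch argument (extract a realizable $y\in S(T_{v_i},k_2)\cap[l,u]$ exactly as in the proof of Lemma~\ref{lem:polypartitionIFF}, then lift the $d$-consecutive set in $[a,a']$), and your superset/concatenation argument for why a single $\uplus$-step preserves $d$-consecutiveness, are both sound.

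The step you yourself defer as ``the main technical obstacle'' is, however, a genuine gap and not a routine verification. In the merging branch, \eqref{eq:polyoplus} as written stores $([a+b,a'+b'],k_1+k_2-1)$ under the sole condition $a+b\leq u$; whenever $a'+b'>u$, the tuple $(a'+b',k)$ cannot lie in $S(T_v^i)$ at all, because every weight occurring in $S(T_v^i)$ is at most $u$ by the definition of an extendable partition. So for such tuples the endpoint claim is not provable from the stated operation, and your proposed repair --- cap the endpoint at $\min(a'+b',u)$ and ``verify that the monotone traversal reaches the truncated endpoint'' --- does not close the gap: $u$ itself need not be an attainable sum, and the correct right endpoint would be the largest attainable sum not exceeding $u$, which is not determined by the interval endpoints $a,a',b,b'$ alone. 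For instance, with $u=9$, $d=6$, the underlying sets $\{2,8\}$ and $\{2,8\}$ versus $\{2,7,8\}$ and $\{2,8\}$ have identical interval representations $[2,8]$, yet their largest attainable sums $\leq u$ are $4$ and $9$ respectively. Hence, before any induction can succeed, you must either adjust the stored operation/statement (as the source of this lemma does) or weaken the invariant to what the interval-level computation actually guarantees --- e.g.\ that inside each stored interval the attainable weights form a nonempty $d$-consecutive set containing the left endpoint, and that a stored interval meets $[l,u]$ if and only if some attainable weight does --- which suffices for Theorem~\ref{theo:polyruntime} but is weaker than the lemma as written and than Corollary~\ref{cor:Iequal}. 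Your grid-path argument is fine in the case $a'+b'\leq u$; the unresolved case is precisely the one on which the literal claim hinges.
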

The algorithm is adjusted to use the interval partition sets to determine if a given graph has a $p$-$(l,u)$-partition. Note that the following lemma stays true if $T$ is the tree representation of a cactus graph.
\begin{lemma}\label{lem: poly existence partition}
	A tree $T$ with root $r$ has a $p$-$(l,u)$-partition if and only if the set $I(S(r))$ contains a tuple $(\left[a,a'\right],p)$ such that $\left[a,a'\right]\cap \left[l,u\right]\neq \emptyset$.
\end{lemma}
\begin{proof}
	First, we assume that $T$ has a $p$-$(l,u)$-partition. From Remark \ref{rem: existence partition} follows that there is a tuple $(x,p)\in S(r)$ such that $l\leq x\leq u$, i.e. $x\in \left[ l,u\right]$. The weight $x$ is contained in some interval $\left[a,a'\right]\in I(S(r,p))$ with $a\leq x\leq a'$. Therefore, we have the tuple $(\left[a,a'\right],p)\in S(r)$ and $\left[a,a'\right]\cap \left[l,u\right]\neq \emptyset$.\\
	Now, assume that there is a tuple $(\left[a,a'\right],p)\in S(r)$ such that $\left[a,a'\right]\cap \left[l,u\right]\neq \emptyset$. We show that there exists a tuple $(x,p)\in S(r)$ with $l\leq x\leq u$. We consider two cases.
	\begin{description}
		\item[Case 1:] $a'\leq u$.\\
		The tuple $(a',p)$ is contained in $S(r)$. Since $\left[a,a'\right]\cap \left[l,u\right]$, we have $l\leq a'\leq u$.
		\item[Case 2:] $a\leq u \leq a'$.\\
		Let $X={x_1,x_2,\ldots,x_k}$ be the maximal $d$-consecutive subset of $S(r,p)$ such that $a=x_1 < x_2 < \ldots < x_k=a'$. Let $i$ be the smallest integer such that $u\leq x_i$. If $u= x_i$, we have $(x_1,p)\in S(r)$ with $l\leq x_i\leq u$. If $u<x_i$, we know that $i>1$ and there exists an element $x_{i-1}$ such that $x_{i-1}<u$ and $(x_{i-1},p)\in S(r)$. Using the fact that $X$ is $d$-consecutive, we can show that $l\leq x_{i-1}$.
		\begin{align*}
		x_i - x_{i-1} &\leq d = u-l \quad \Rightarrow\quad l \leq \underbrace{u - x_i}_{<0} + x_{i-1}. 
		\end{align*}
	\end{description}
	 \end{proof}

Now, we present the adjustments needed for the algorithm to compute interval partition sets. We denote the computed sets with $I_{(j)}(v,i)$. Later we show that these computed sets are indeed equal to the interval partition sets $I_{(j)}(S(v,i))$ we defined above.
When computing interval partition sets, we use the same $\odot$-operation as before, but redefine the $\oplus$- and $\otimes$-operations from Equation~\ref{eq: oplus otimes} to be compatible with the intervals. 
\begin{align}\label{eq: oplus otimes2}
I_1 \oplus I_2 = &\{([a,a'],k_1+k_2)\ \vert\ [b,b']\cap [l,u]\neq \emptyset, k_1+k_2\leq p,\nonumber\\
& ([a,a'],k_1)\in I_1, ([b,b'],k_2)\in I_2\}\nonumber\\
I_1 \otimes I_2 = &\{([a+b, a'+b'],k_1+k_2-1)\ \vert\ a+b\leq u, k_1+k_2-1\leq p,\\
& ([a,a'],k_1)\in I_1, ([b,b'],k_2)\in I_2\}.\nonumber\\
I_1\odot I_2 = & (I_1\oplus I_2)\cup(I_1\otimes I_2)\nonumber
\end{align}
Then we can compute interval partition sets $I'$ as follows:
\begin{align*}
I(v,0) &= \{([w(v),w(v)],1)\} \\
I'(v,i) &= I(v,i-1)\odot I(v_i)
\end{align*}
Note that if $[a,a']$ and $[b,b']$  are $d$-consecutive the resulting intervals in Equation \ref{eq: oplus otimes2} are $d$-consecutive as well. Thus, all intervals in $I'(v,i)$ are $d$-consecutive. However, the interval from a computed tuple $([a,a'],k)\in I'(v,i)$ might be $d$-interfering with another interval from a tuple $([b,b'],k)$ in this set. Therefore, we use the merge-operation as defined above to obtain the set $I(v,i)$, which only contains intervals of \emph{maximal} $d$-consecutive subsets (for each $k$), as follows:
\begin{equation*}
I(v,i) = M(I'(v,i)) = \{([a,a'],k)\ \vert \ [a,a']\in M(I'(v,i,k))\}
\end{equation*}
When partitioning a cycle, we use the same approach (see \ref{alg: cyclePartition poly}). We can replace the sets $S$ with $I$ and perform additional merge-operations.
\begin{align*}
I'_j(v,i) &= I_j(v,i-1)\odot I_j(v_i)\\
I_j(v,i) &= M(I'_j(v,i)).
\end{align*}
In the end, all interval partition sets are united and merged again and the resulting set is returned by the function. 
\begin{align*}
I'_c &= \bigcup_{j=1}^{m-1} I_j(w_1)\\
I_c &= M(I'_c).
\end{align*}
The following lemma from Ito et al. shows that interval partition sets $I(v,i)$ for a subtree $T_v^i$ are indeed equal to the interval partition sets $I(S(v,i))$ defined in Equation~\ref{eq: interval partition set}. The lemma still holds if the interval partition set corresponds to a subgraph that is represented by a subtree of the tree representation or in case of $I_j(v,i)$ corresponds to a subgraph represented by a subtree with a certain cycle configuration. 
\begin{lemma}
	The set $I(v,i)$ is equal to $I(S(v,i))$.
\end{lemma}
Lemma~\ref{lem: number intervals bounded} shows that instead of having at most $u$ different weights $x$ for each $k$ ($1\leq k\leq p$), we have at most $k=\mathcal{O}(p)$ intervals $\left[a,b\right]$. Thus, we compute sets $I_{(j)}$ of size $\mathcal{O}(p^2)$ instead of $\mathcal{O}(up)$.
\begin{lemma}\label{lem: number intervals bounded}
	The number of elements in $I(S(v,i,k))$ does not exceed $k$.
\end{lemma} 
\begin{corollary}\label{cor: number interval sets}
	The size of $I(v,i)$ is in $\mathcal{O}(p^2)$.
\end{corollary}
\begin{algorithm}[h]
	\DontPrintSemicolon
	\KwIn{cactus graph $G$ rooted in hinge $r$, integers $l,u,p$ with $0\leq l\leq u$ and $0<p$.}
	\KwOut{\texttt{yes} if there is a $p$-$(l,u)$-partition of $G$ and \texttt{no} otherwise.}
	$T=(V',E')$ tree representation of $G$ rooted in $r$,\;
	$C'$ set of all cycle nodes in $V'$\;
	\ForAll{$v\in V'\setminus C'$ bottom-up}{
		$I(v,0) = \{([w(v),w(v)],1)\}$\;
		\For{$1\leq i\leq c_v$}{
			\uIf{$v_i\in C'$}{
				$I(v,i) = \ref{alg: cyclePartition poly}(v_i)$\;}
			\Else{
				$I'(v,i)=I(v,i-1)\odot I(v_i)$\;
				$I(v,i) = M(I'(v,i))$\;}}}
	\lIf{$(x,k)\in I(r)$ such that $l\leq x\leq u$ and $k=p$} 
	{\Return \texttt{yes}}\lElse{\Return \texttt{no}}
	\caption{Polynomial-time algorithm to decide the $p$-$(l,u)$-partition problem for a cactus graph}
	\label{alg: cactus partition poly}
\end{algorithm}
\begin{function}[H]
	\DontPrintSemicolon
	\KwIn{Cycle node $c$ corresponding to a cycle $C=\langle w_1,w_2,\ldots, w_{m}\rangle $, integer $i_c$}
	\For{$j=1$ \KwTo $m-1$}{
		Let $T(C,j)$ be the tree rooted in $w_1$ corresponding to the cycle $C$ in configuration $j$ and $b_v$ the number of children of $v$ in $T(C,j)$.\\
		\ForAll{$v$ bottom-up}{
			\uIf{$w_i=w_1$}{
				$I(w_1,0) = I(pred(c),i_c-1)$\;
			}
			\uElseIf{$w_i$ H-vertex corresponding to a child node $H_{w_1}$}{
				$I(w_i,0) = I(H_{w_i})$\; 
			}\Else{ $I(w_i,0) = \{(w(v_i),1)\}$}
		}
		\For{$i=1$ \KwTo $b_v$}{
			$I'_j(v,i) = I_j(v,i-1)\odot I_j(v_{i})$\;
			$I_j(v,i) = M(I'_j(v,i))$\;
	}}
	$I'_c= \bigcup_{j=1}^{m-1} I_j(w_1)$\;
	$I_c = M(I'_c)$\;
	\Return $I_c$	
	\caption{PolyCyclePartition($c$,$i_c$)}
	\label{alg: cyclePartition poly}
\end{function}
\begin{theorem}
	Let $G$ be a weighted cactus graph and $T_G$ its tree representation. Given a positive integer $p$ and two non-negative integers $l$ and $u$ with $l\leq u$, Algorithm~\ref{alg: cactus partition poly} solves the decision $p$-$(l,u)$-partition problem in time $\mathcal{O}(p^4n^2)$.
\end{theorem}
\begin{proof}
	The correctness of Algorithm \ref{alg: cactus partition poly} follows from the correctness of Algorithm \ref{alg: cactus partition} and the lemmas in Subsection \ref{sec: poly time algorithm}. Algorithm \ref{alg: cactus partition poly} performs the same number of $\odot$-operations as Algorithm \ref{alg: cactus partition}, namely $\mathcal{O}(n^2)$. Because of Corollary~\ref{cor: number interval sets}, the time it takes to perform one $\odot$-operations has been reduced to $\mathcal{O}(p^4)$. We perform the merge-operation on all computed interval partition sets $I'$ and $I'_j$. The runtime of the merge-operation is linear in the number of elements in these sets. Thus, most merge-operations take $\mathcal{O}(p^4)$ time. There is one exception, which is the merge-operation that is performed on the set $I'_c$ inside \ref{alg: cyclePartition poly}. This set is the union of the interval partition sets for all configurations of a cycle and therefore contains $\mathcal{O}(np^4)$ elements. However, this specific operation is only executed once for each cycle node and thus at most $\mathcal{O}(n)$ times during the algorithm. Additionally, it takes $\mathcal{O}(p^2)$ time to check if there is a correct tuple in $I(r)$ for $r$ being the root of the tree representation and decide if a $p$-$(l,u)$-partition exists. This results in an overall runtime of $\mathcal{O}(p^4n^2)$.
	 \end{proof}
\bigskip
\begin{remark}\label{rem: bounded length/degree}
	If the length of the cycles in the given cactus graph is bounded by a constant $c$, there only $\mathcal{O}(cn)$ $\odot$-operations in Algorithm~\ref{alg: cactus partition} and \ref{alg: cactus partition poly}. Thus the runtime reduces by a factor of $n$ and equals the one for trees.
\end{remark}
Now assume that the weights on the vertices and the weight bounds $l$ and $u$ are real numbers. Because the method does not store individual weights but intervals of weights, we can compute the interval partition sets in the same way and 
that Lemma~\ref{lem: number intervals bounded} still holds in this case. Thus, Algorithm~\ref{alg: cactus partition poly} also solves the given problem for real-valued weights and weight bounds.

\subsection{Computation of partitions}\label{sec: partition computation} 
The algorithms presented in the previous subsections solved the decision variant of the $p$-$(l,u)$-partition problem. Thus, they returned either \texttt{True} or \texttt{False} depending on whether a feasible partition exists. In this section, we consider the problem of computing a feasible partition.
\begin{problem}[Computation $p$-$(l,u)$-partition problem]
	Let $G=(V,E,w)$ be a vertex-weighted graph. Given two non-negative integers $l$ and $u$ with $l\leq u$, find a $p$-$(l,u)$-partition of $G$ if it exists. 
\end{problem}
Let $G=(V,E)$ be a tree (representation) with root $r$. The presented algorithms compute partition sets such that each element in a partition set corresponds to a certain partition of the considered subgraph. Remark~\ref{rem: existence partition} and Lemma~\ref{lem: poly existence partition} state that if the partition set $S(r)$ contains a certain element there exists a feasible partition of $G$. We can compute the corresponding partition by storing additional information for each element during the computation and use backtracking afterwards. For the simple algorithms (Alg.~\ref{alg: tree partition} and \ref{alg: cactus partition}), the approach is straightforward, but the polynomial-time algorithm requires additional procedures. In the following, we present the method for trees and explain the required additions for cactus graphs.\par 
Let $P=\{V_1,\ldots,V_p\}$ be a partition of a tree $T=(V,E)$ with root $r$. We can describe the partition as the set of edges $E_P$ such that the deletion of the edges in $E_P$ from $E$ results in connected components that correspond to the clusters $V_i$. Each tuple in a partition set was computed by applying the $\odot$-operation on a certain edge in the graph. The idea is to remember if in the corresponding partition this edge is contained in a single cluster or connects two clusters. If the tuple resulted from the $\otimes$-operation, the edge is contained in a cluster. If the tuple resulted from the $\oplus$-operation, the edge connects two clusters. Using this information, we can decide during the backtracking if the edge has to be added to the set $E_P$.\par 
\paragraph{Simple computation}
First, we show the computation method for the simple algorithms. For each tuple $(x,k)\in S(v,i)$, we store a corresponding tuple $(b,x',k')$ containing the following elements: $(x',k')$ is the tuple in $S(v_i)$ with which $(x,k)$ was computed. The Boolean value $b$ is 1 if the tuple was computed with the $\oplus$-operation and 0 if the $\otimes$-operation was used. Note that it suffices to store only one such tuple for each element in the partition sets. The partition sets $S(v,0)$ are initialized with a tuple $(x,k)=(w(v),1)$ and for this element we store a \texttt{None} entry, which indicates that the backtracking can stop at this point.
During the backtracking process, we consider some tuple $(x,k)\in S(v,i)$ and its corresponding tuple $(b,x',k')$. If $b=1$, we include the edge $(v,v_i)$ in $E_P$ and continue the search with the tuples $(x,k-k')\in S(v,i-1)$ and $(x',k') \in S(v_i)$. If $b=0$, we continue with $(x-x',k-k'+1)\in S(v,i-1)$ and $(x',k')\in S(v_i)$ without adding the edge to the partition.\par 
Assume that the partition sets and tuples have been precomputed by Algorithm~\ref{alg: tree partition} and the algorithm returned \texttt{True}. There are $\mathcal{O}(n)$ partition sets of size $\mathcal{O}(up)$ in which we search for a certain element. Therefore, the backtracking requires $\mathcal{O}(upn)$ time to compute a feasible partition. Thus, the computation problem can be solved in $\mathcal{O}(u^2p^2n)$ time if Algorithm~\ref{alg: tree partition} is applied.\par 
Let $T$ be the tree representation of a cactus graph. In this case, Algorithm~\ref{alg: cactus partition} is used to compute the partition sets. For each tuple $(x,k)\in S_j(v,i)$ that was computed for a certain configuration $j$ during \ref{alg: cyclePartition}, we store a tuple $(j,b,x',k')$ for the backtracking. The additional parameter $j$ represents the configuration and and thus determines in which partition sets the computation continues. In the set $S_c$, the partition sets for all configurations are united and duplicates are removed. Note that it suffices to keep only one tuple $(j,b,x',k')$ for each element $(x,k)$. During the backtracking process when encountering a tuple $(j,b,x',k')$, we consider the $j$-th configuration of the corresponding cycle $C=\langle w_1,\ldots,w_m\rangle $. In this configuration, the edge $(w_{j},w_{j+1})$ is removed from the cycle and can be added to the set $E_P$. The vertices $w_{j}$ and $w_{j+1}$ are either in two different clusters or contained in the same one. In the first case, the edge $(w_{j},w_{j+1})$ connects two clusters and therefore has to be added to the partition. The second case implies that all vertices of the cluster are contained in the same cluster. Therefore, the connectivity of the cluster stays preserved even if all edges of $E_P$ (including $(w_{j},w_{j+1})$) are removed from the graph.
For all partition sets that are not computed with \ref{alg: cyclePartition}, we store tuples $(b,x',k')$ and perform the backtracking as above. The computation of a partition requires $\mathcal{O}(upn)$ time and thus the computation problem can be solved in $\mathcal{O}(u^2p^2n^2)$ time if Algorithm~\ref{alg: cactus partition} is applied.\par 

\paragraph{Polynomial-time computation}
To the best of our knowledge there was no method presented yet how to solve the computation problem in polynomial time. The idea of the polynomial-time partition algorithm is to reduce the size of the computed partition sets by saving merged intervals instead of individual weights. However, this means that the information about the partition that corresponds to a certain weight is lost. From Lemma~\ref{lem: poly existence partition}, we merely know that there exists a weight corresponding to a feasible partition in some interval $[a,a']$, but it is unclear how to access this specific weight and its corresponding partition. We propose a method that stores additional information about the merged intervals to provide the desired backtracking while retaining a polynomial runtime.\par 
Note that Algorithm~\ref{alg: cactus partition poly} solves the decision problem not only for cactus graphs but for trees as well. The \ref{alg: cyclePartition poly} procedure is only executed for cactus graphs. 
Let $([x,y],k)\in I(v,i)$ be a tuple in some interval partition set $I'(v,i)=I(v,i-1)\odot I(v_i)$. Similar to before, we store a corresponding tuple $(b,[x',y'],k')$. If $b=1$, the tuple was computed with the $\oplus$-operation, otherwise the $\otimes$-operation was used. For elements in some set $I_j$, which is computed when partitioning a cycle, the tuple contains an additional parameter $j$ that indicates the configuration. In the following, we describe the approach for the general sets $I$. The backtracking process for sets $I_j$ works equivalently with similar additions as presented for the simple computation approach.\par 
\begin{remark}
	Let $([x,y],k)$ be an element in $I(v,i)$ and $X$ be the maximal subset of elements in $I'(v,i)$ such that $M(X)=\{([x,y],k)\}$. An element $([x',y'],k)\in I'(v,i)$ is contained in $X$ if and only if $[x',y']\subseteq [x,y]$.
\end{remark}

The backtracking process begins at the root of the tree. Let the root node~$r$ have $j$ children and $r_j$ is the last child. 
We select a tuple $([x,y],p)\in I(r)$ such that $[x,y]\cap [l,u]\neq \emptyset$. We know that there exists a weight $a\in [x,y]$ such that $l\leq a\leq u$. The weight $a$ is also contained in some $[x',y']\cup [x,y]$ such that $([x',y'],p)\in I'(v,i)$.
Therefore, we search $I'(v,i)$ for an element $([x',y'],p)$ such that $[x',y']\cup[x,y]$ and $[x,y]\cap [l,u]\neq \emptyset$. There might be multiple tuples that satisfy this condition and we select one of them. This element $([x',y'],p)$ has a corresponding tuple $(b,[x_1,y_1],k_1)$.
The backtracking continues with $([x_1,y_1],k_1)\in I(r_j)$ and some element $([x_2,y_2],k_2)\in I(r,j-1)$ that is determined by the value of $b$. In the following steps of the backtracking, the search objective changes. If $b=0$, we have $([x',y'],p)=([x_1+x_2],k_1+k_2-1)$. In this case, we know that $[x_1+x_2,y_1+y_2]\cap [l,u]\neq \emptyset$. When searching for an element $([x'_1,y'_1],k_1)\in I'(r_j)$, it has to fulfill $[x'_1+x_2,y'_1+y_2]\cap [l,u]\neq \emptyset$ (and $[x'_1,y'_1]\subseteq [x_1,y_1]$). Additionally, we search for an element $([x'_2,y'_2],k_2)\in I'(r,j_1)$. Here it is not enough to find an element such that 
$[x_1+x'_2,y_1+y'_2]\cap [l,u]\neq \emptyset$, because it has to fulfill  $[x'_1+x'_2,y'_1+y'_2]\cap [l,u]\neq \emptyset$ as well. Note that not all tuples $[x'_2,y'_2]\in I'(v,i-1)$ with $[x'_2,y'_2]\subseteq[x_2,y_2]$ fulfill $[x'_1+x'_2,y'_1+y'_2]\cap [l,u]\neq \emptyset$, but there exists at least one tuple that fulfills this condition.
\begin{lemma}
	Let $[x_1,y_1]$ and $[x_2,y_2]$ be two $d$-consecutive intervals such that $[x_1+x_2,y_1+y_2]\cap[l,u]\neq\emptyset$. Let $X_1$ and $X_2$ be a set of $d$-interfering $d$-consecutive intervals such that $M(X_1)=[x_1,y_1]$ and $M(X_2)=[x_2,y_2]$. For every $[x'_1,y'_1]\in X_1$ fulfilling $[x'_1+x_2,y'_1+y_2]\cap[l,u]\neq \emptyset$ exists an interval $[x'_2,y'_2]$ such that $[x'_1+x'_2,y'_1+y'_2]\cap[l,u]\neq \emptyset$.
\end{lemma}
\begin{proof}
	Because the intervals are consecutive, we have $m_1$ values $a_i$ in $[x_1,y_1]$ such that $x_1=a_1 < a_2 < \ldots < a_m=y_1$ and $a_i-a_{i-1}\leq d=u-l$. Analogously, we have $m_2$ values $b_i$ in $[x_2,y_2]$ fulfilling the same property. For every $a_i$ exists some interval $[x'_1,y'_1]\in X_1$ with $a_i\in [x'_1,y'_1]$. The same holds for $b_i$ and $X_2$.
	Let $[x'_1,y'_1]$ be an interval in $X_1$. The interval $[x'_1+x_2,y'_1+y_2]$ consists of all values $a+b$ with $a\in [x'_1,y'_1]$ and $b\in[x_2,y_2]$. If $[x'_1+x_2,y'_1+y_2]\cap[l,u]\neq \emptyset$, there exists a value $x=a+b\in [x'_1+x_2,y'_1+y_2]$ with $l\leq x\leq u$. Let $[x'_2,y'_2]\in X_2$ be the interval such that $b\in [x'_2,y'_2]$. Obviously, $x$ is contained in $[x'_1+x'_2,y'_1+y'_2]$ and thus  $[x'_1+x'_2,y'_1+y'_2]\cap[l,u]\neq \emptyset$.
	 \end{proof}
\bigskip
In later stages of the backtracking a number of previous selections have to be taken into account for the search objective. Whenever we encounter a tuple with $b=1$ and an edge $(v,v_i)$ is added to the partition $E_P$, the search objective resets for the element $([x',y'],k')\in I'(v_i)$, which only has to fulfill $[x',y']\cap[l,u]\neq \emptyset$. An exemplary search query for the tree presented in Figure~\ref{fig: computation} is demonstrated in Example~\ref{exam: computation}. Note that because the computation of the partition sets was conducted in a bottom-up and left-to-right manner, the backtracking process follows a top-down and right-to-left approach.

\begin{example}\label{exam: computation}
	We illustrate the computation method for the tree $T$ shown in Figure~\ref{fig: computation tree} and present the first few steps of the backtracking process, which computes a feasible partition of $T$. We restrict this example to the computation for the edges between the nodes $r$ and $v_1$ to $v_4$. The complete computation continues in their respective subtrees. If $[a,b]\cap[l,u]\neq\emptyset$ for some interval $[a,b]$, we say that $[a,b]$ fulfills (*). Given some $([x_i,y_i],k_i)\in I(v)$, whenever we say to find an element $([x'_i,y'_i],k_i)\in I'(v)$, it is implied that $[x'_i,y'_i]\subseteq[x_i,y_i]$ should hold.\par 
	Using Algorithm~\ref{alg: cactus partition poly}, we computed the interval partition sets depicted in Figure~\ref{fig: computation backtracking}. 
	If the algorithm returned \texttt{True}, there is an element $([x,y],p)\in I(r)$ such that $[x,y]$ fulfills (*). We find $([x',y'],p)\in I'(r)$ with $(b,[x_1,y_1],k_2))$ such that $[x',y']$ fulfills (*). Let us assume that $b=0$.
	\begin{enumerate}
		\item Consider $([x_1,y_1],k_1)\in I(v_2)=I(v_2,2)$. Find $([x'_1,y'_1],k_1)\in I'(v_2)$ such that $[x'_1+x_2,y'_1+y_2]$ fulfills (*). Let us assume that $(1,[x_3,y_3],k_3)$ is the corresponding tuple. The edge $(v_2,v_4)$ is added to edge set $E_P$ of the partition.
		\item  Consider $([x_2,y_2],k_2)=([x'-x_1,y'-y_1],p-k_1+1)\in I(r,i-1)$. Find $([x'_2,y'_2],k_2)\in I'(r,i-1)$ such that $[x'_1+x'_2,y'_1+y'_2]$ fulfills (*). Let us assume that $(0,[x_5,y_5],k_5)$ is the corresponding tuple.
		\item  Consider $([x_3,y_3],k_3)\in I(v_4)$. Find $([x'_3,y'_3],k_3)\in I'(v_4)$ such that $[x'_3,y'_3]$ fulfills (*). The computation continues based on the corresponding tuple.
		\item Consider $([x_4,y_4],k_4)=([x'_1,y'_1],k_1-k_3)\in I(v_2,j-1)$. Find $([x'_4,y'_4],k_4)\in I'(v_2,j-1)$ such that $[x'_4+x'_2,y'_4+y'_2]$ fulfills (*). Let us assume that $(0,[x_7,y_7],k_7)$ is the corresponding tuple.
		\item Consider $([x_5,y_5],k_5)\in I(v_1)$. Find $([x'_5,y'_5],k_5)\in I'(v_1)$ such that $[x'_5+x_6+x'_4,y'_5+y_6+y'_4]$ fulfills (*).  The computation continues based on the corresponding tuple.
		\item Consider $([x_6,y_6],k_6)=([x'_2-x_5,y'_2-y_5],k_2-k_5+1)\in I(r,i-2)$. Find $([x'_6,y'_6],k_6)\in I'(r,0)$ such that $[x'_5+x'_6+x'_4,y'_5+y_6'+y'_4]$ fulfills (*).  The computation continues based on corresponding tuple.
		\item Consider $([x_7,y_7],k_7)\in I(v_3)$. Find $([x'_7,y'_7],k_7)\in I'(v_3)$ such that $[x_7'+x_8+x'_5+x'_6, y_7'+y_8+y'_5+y'_6]$ fulfills (*).  The computation continues based on the corresponding tuple.
		\item Consider $([x_8,y_8],k_8)=([x'_5-x_7,y'_5-y_7],k_5-k_7+1)\in I(v_2,j-2)$. Find $([x'_8,y'_8],k_8)\in I'(v_2,0)$ such that $[x'_8+x'_7+x'_5+x'_6, y'_8+y'_7+y'_5+y'_6]$ fulfills (*).  The computation continues based on the corresponding tuple.
	\end{enumerate}
\end{example}

\begin{figure}[t]
	\centering
	\begin{subfigure}{0.4\textwidth}
		\includegraphics[scale=0.9]{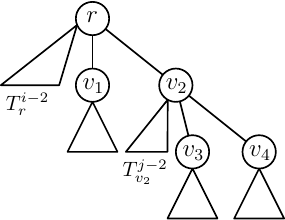}
		\caption{}
		\label{fig: computation tree}
	\end{subfigure}\hfill
	\begin{subfigure}{0.5\textwidth}
		\includegraphics[scale=1]{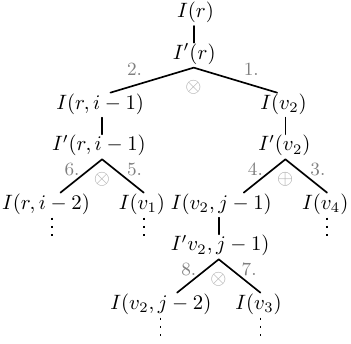}
		\caption{}
		\label{fig: computation backtracking}
	\end{subfigure}
	\caption{Example for the computation of a partition of the tree shown in (a). The backtracking process searches through the interval partition sets shown in (b) as explained in Example~\ref{exam: computation}.}
	\label{fig: computation}
\end{figure}

\begin{theorem}
	Let $G=(V,E,w)$ be a vertex-weighted graph. Given a positive integer $p$ and two non-negative integers $l$ and $u$ with $l\leq u$, we can solve the computation $p$-$(l,u)$-partition problem in $\mathcal{O}(p^4n)$ time if $G$ is a tree and $\mathcal{O}(p^4n^2)$ time if $G$ is a cactus graph.
\end{theorem}
\begin{proof}
	We use Algorithm~\ref{alg: cactus partition poly} to compute interval partition sets. On each edge in the graph, we perform either exactly one $\odot$-operation to compute an interval partition set $I'$ or we perform $\mathcal{O}(n)$ $\odot$-operations to compute sets $I'_j$. For each computed element $([x,y],k)$, we store one corresponding tuple $((j),b,[x',y'],k')$. This does not influence the time or space complexity of this operation. The resulting interval partition sets $I'_{(j)}$ have size $\mathcal{O}(p^4)$. More precisely, they contain at most $p^3$ elements for each value of $k\ (\leq p)$. As mentioned before, the runtime of the merge-operation is linear in the number of elements. The merging-process results in at most $k$ tuples for which we store a corresponding list $L$. Note that each of the previous $p^3$ elements will be added to exactly one list $L$. This happens for each value of $k$. Therefore, storing this information for one interval partition set $I$ or $I_j$ requires $\mathcal{O}(p^4)$ space. Again, there is one exception that is the computation of the set $I_c$ in \ref{alg: cyclePartition poly}, where the merging-operation requires $\mathcal{O}(p^4n)$ time and space.\par 
	After all partition sets are computed and the algorithm returned \texttt{True}, we compute a partition using the backtracking approach described in this subsection. We search for a feasible tuple in the interval partition set, which has size $\mathcal{O}(p^2)$. Then, we search for a feasible interval in its corresponding list $L$ and continue the computation. For trees, the lists have size $\mathcal{O}(p^3)$ and therefore these search procedures require $\mathcal{O}(p^2+p^3)=\mathcal{O}(p^3)$ time for each partition set and the entire backtracking takes $\mathcal{O}(p^3n)$ time. For cactus graphs, there are at most $\mathcal{O}(n)$ partition sets, where the search procedure requires $\mathcal{O}(p^3n)$ time and thus the backtracking takes $\mathcal{O}(p^3n^2)$ time. In both cases, the overall runtime is dominated by the execution of Algorithm~\ref{alg: cactus partition poly}, which requires $\mathcal{O}(p^4n)$ time for a tree and $\mathcal{O}(p^4n^2)$ for a cactus graph.
	 \end{proof}

\subsection{MinNum- and MaxNum-\texorpdfstring{$(l,u)$}{(l,u)}-partition problem}\label{sec: minimum and maximum partition}
So far, we considered $(l,u)$-partitions with a fixed number of clusters. Now, we are interested in partitions of optimal size.
\begin{problem}[MinNum/MaxNum $(l,u)$-partition problem]
	Let $G=(V,E,w)$ be a vertex-weighted graph. Given two non-negative integers $l$ and $u$ with $l\leq u$, find a $(l,u)$-partition of $G$ with the minimum or respectively maximum number of clusters.
\end{problem} 
Let $G$ be a cactus graph and $r$ the root of its tree representation. We define $k_{min}$ as number of clusters in a MinNum-$(l,u)$-partition. For a MaxNum-$(l,u)$-partition, $k_{max}$ is defined analogously. By applying Algorithm \ref{alg: cactus partition poly} with $p=n$, we can find $k_{min}$ and $k_{max}$ as follows:
\begin{align*}
k_{min} &= \min\{k \ \vert\ ([a,a'],k)\in I(r) \text{ s.t. } [a,a']\cap [l,u]\neq \emptyset\}\\
k_{max} &= \max\{k \ \vert\ ([a,a'],k)\in I(r) \text{ s.t. } [a,a']\cap [l,u]\neq \emptyset\}.
\end{align*}
Algorithm \ref{alg: cactus partition poly} takes $\mathcal{O}(n^6)$ time and searching for $k_{min}$ and $k_{max}$ takes $\mathcal{O}(n^2)$ time. 
If the graph is a tree, we can find the minimum or maximum number of cluster in $\mathcal{O}(n^5)$ time~\cite{Ito2012}. With the computation method presented in the previous subsection, we can compute the corresponding partitions and thus solve the MinNum- and MaxNum-$(l,u)$-partition problem with the same runtime.
\begin{theorem}
	Let $G=(V,E,w)$ be a vertex-weighted graph. Given two non-negative integers $l$ and $u$ with $l\leq u$, the MinNum- and MaxNum-$(l,u)$-partition problems can be solved in $\mathcal{O}(n^5)$ time if $G$ is a tree and in $\mathcal{O}(n^6)$ time if $G$ is a cactus graph.
\end{theorem}

\section{Other weight-constrained partition problems}\label{sec: other problems}
The partition methods presented in Section \ref{sec: general l,u partition problems} can be used as an algorithmic framework to solve other $(l,u)$-partition problems. Until now, we only considered vertex-weighted graphs. Note that the previous algorithms work analogously for edge-weighted graphs, where the weight of a cluster is defined as the sum of the weights on the edges inside the cluster. If the graph has both weights on its vertices and costs on its edges, further problems arise. On the one hand, one can include additional constraints and, on the other hand, one can consider different optimization goals. Many of these problems become NP-hard even on trees. Here, we show for a selection of problems how our method can be adjusted to solve these problems for both trees and cactus graphs in pseudopolynomial time.

\subsection{MinCost partition problem}\label{sec: min cost problem}
Let $G=(V,E,w,c)$ be graph with weights $w(v)$ on its vertices and costs $c(e)$ on its edges. For a partition $P$, we define the \emph{cost} $C(P)$ as the sum of the costs of all edges outside of the clusters, i.e. edges $(v,v')$ such that $v$ and $v'$ are not in the same cluster in $P$. The \emph{MinCost-$(l,u)$-partition problem} is defined as follows. See Figure~\ref{fig: example mincost} for an example.
\begin{problem}[MinCost-$(l,u)$-partition problem]
	Let $G=(V,E,w,c)$ be a graph as defined above. Given two non-negative integers $l$ and $u$ with $l\leq u$, find a $(l,u)$-partition $P$ of $G$ such that $C(P)$ is minimized.
\end{problem} 

First, let us assume that the graph is a tree. Even without a lower weight bound $l$, this problem is NP-hard, which can be shown with a reduction of the knapsack problem. For this case, Lukes presented a pseudopolynomial-time algorithm with a runtime of $\mathcal{O}(u^2n)$~\cite{lukes1974efficient} and Johnson and Niemi presented a $\mathcal{O}(un^2)$ algorithm~\cite{johnson1983knapsacks}. We show how this problem can be solved, when a lower weight bound $l$ is applied.\par 
We redefine the parameter $k$ in the partition sets as the cost of a partition. Thus, an element $(x,k)\in S(v,i)$ corresponds to an extendable $(l,u)$-partition $P$ of the subtree $T_v^i$ such that the weight of the cluster containing the node $v$ is $x$ and the cost of the partition is $k$.
\begin{align*}
S(v,i) = \{(x,k)\ \vert\ &\exists\text{ extendable $(l,u)$-partition $P$ of $T_v^i$}\\
& \text{s.t. $C(P) = k$ and $w(P_v)=x$}\}.
\end{align*}
Let $r$ be the root of the tree. The MinCost-$(l,u)$-partition corresponds to the tuple $(x,k)\in S(r)$ such that $k$ is the smallest cost over all tuples $(x,k)$ fulfilling $l\leq x\leq u$.\par

\begin{figure}[h]
	\centering
	\begin{subfigure}[c]{0.48\textwidth}
		\includegraphics[scale=0.8]{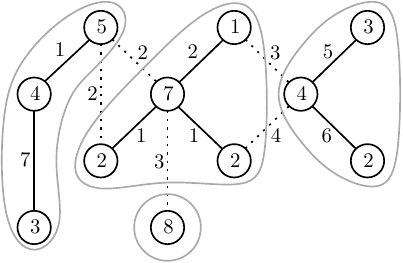}
		\subcaption{MinNum-(3,12)-partition.}
	\end{subfigure}
	\begin{subfigure}[c]{0.48\textwidth}
		\includegraphics[scale=0.8]{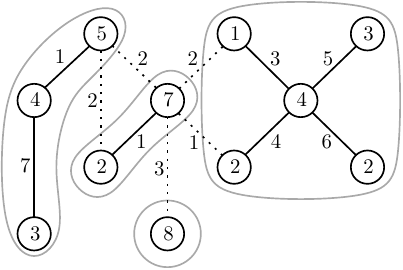}
		\subcaption{MinCost-(3,12)-partition.}
	\end{subfigure}
	\caption{Example of a MinNum- and MinCost-$(l,u)$-partition of a vertex- and edge-weighted cactus graph. The values inside the vertices are their weights. The left partition has cost 14 and the right one has cost 10, which is the minimum cost.}
	\label{fig: example mincost}
\end{figure}
We compute partition sets using the $\odot$-operation as before, but the $\oplus$- and $\otimes$-operations have to be adjusted. Whenever the two partition sets $S(v,i-1)$ and $S(v_i)$ are combined, their costs are summed up. Remember that the $\oplus$-operation considers the case that the two clusters containing the node $v$ and respectively $v_i$ are not merged. Therefore, the edge $(v,v_i)$ connects two nodes in different clusters and its cost $c(v,v_i)$ has to be added to the partition costs. 
\begin{align*}
S_1 \oplus S_2 &= \{(x_1,k_1+k_2+c(v,v_i))\ \vert\ l\leq x_2, (x_1,k_1)\in S_1, (x_2,k_2)\in S_2\}\\
S_1 \otimes S_2 &= \{(x_1+x_2,k_1+k_2)\ \vert\ x_1+x_2\leq u, (x_1,k_1)\in S_1, (x_2,k_2)\in S_2\}
\end{align*}
We only keep tuples with minimal cost. For this, we use a $\min$-operation, which reduces a partition set to tuples $(x,k_{min})$ with $k_{min}$ being the minimum cost for a specific weight $x$. 
\begin{align*}
S(v,0) &= \{(w(v),0)\}\\
S(v,i) &= \min(S(v,i-1)\odot S(v_i))
\end{align*}
Note that reducing the tuples to those with minimal cost can also be done during the computation of partition sets. For some weight $x$, we can update $k_{min}$ whenever we find a partition with smaller cost. As the size of each partition set is $\mathcal{O}(u)$, the computation takes $\mathcal{O}(u^2)$ time. Hence, we can solve the MinCost-$(l,u)$-partition problem on trees in $\mathcal{O}(u^2n)$ time.\par 
When the considered graph is a cactus graph, we can adapt Algorithm \ref{alg: cactus partition} in a similar way. More adjustments have to be made in the \ref{alg: cyclePartition} procedure. Note that in every configuration of a cycle $C=\langle w_1,\ldots,w_m \rangle $ one specific edge is removed from $C$. However, its cost does not add up to the overall cost of the partition if all vertices of $C$ are contained in one single cluster. Therefore, we include an additional Boolean value $b$ in the tuples of the partition sets $S_j$, which indicates whether a cut in the cycle occurred. For all tuples in the partition sets $S_j(v,0)$, we initialize $b=0$. Let $v$ be a vertex in $C$ and respectively $T(C,j)$. We consider the computation $S_j(v,i)=\min(S_j(v,i-1)\odot S_j(v_i))$.
Let $(x_1,k_1,b_1)$ and $(x_2,k_2,b_2)$ are two tuples that are combined during the computation of a partition set $S_j$. The $\oplus$-operation does not merge the two clusters, a cut occurs and the resulting tuple is $(x_1,k_1+k_2+c(v,v_i),1)$. The $\otimes$-operation checks whether a cut occurred before and the resulting tuple is $(x_1+x_2,k_1+k_2, b_1\lor b_2)$. Let $w_1$ be the root of all tress $T(C,j)$. After computing $S_j(w_1)$, we asses the values $b$ and add the cost of the missing edge if a cut occurred. Thus, we compute $\widehat{S}_j(w_1) = \{(x,k+k')\ \vert \ (x,k,b)\in S_j(w_1)\}$ where $k'=c(w_{j},w_{j+1})$ if $b=1$ and $k'=0$ otherwise. Then, we compute $S_c$ as the union of all $\widehat{S}_j(w_1)$ and again reduce this set to only tuples with minimal cost. This partition set is then returned by \ref{alg: cyclePartition}. The runtime of the algorithm is still dominated by the time it takes to perform the $\odot$-operation.
\begin{theorem}
	Let $G=(V,E,w,c)$ be a graph and $l$ and $u$ two integers as defined above. The min-cost $(l,u)$-partition problem can be solved in $\mathcal{O}(u^2n)$ time if $G$ is a tree and in $\mathcal{O}(u^2n^2)$ time if $G$ is a cactus graph.
\end{theorem}
It is noteworthy that in the MinNum-$(l,u)$-partition problem, the addition of a lower weight constraint increased the runtime for trees significantly. Namely, from a linear runtime~\cite{kundu1977linear} to $\mathcal{O}(n^5)$~\cite{Ito2012}. In the min-cost $(l,u)$-partition problem, we were able to obtain the same runtime with a lower weight bound than Lukes~\cite{lukes1974efficient} did without one. Moreover, this method can also be used to solve the \emph{MinCost-p-(l,u)-partition problem}, which is finding a $(l,u)$-partition consisting of $p$ clusters with minimal cost. In this case, we include the size as well as the cost in the tuples of the partition sets and obtain a solution in $\mathcal{O}(u^2p^2n)$ time for trees and  $\mathcal{O}(u^2p^2n^2)$ time for cactus graphs.

\subsection{MinMax and MaxMin partition problems}\label{sec: minmax and maxmin problem}
Let $G$ be a graph in which each vertex is not only assigned a weight $w(v)$ but also a size $s(v)$. Similarly to $w(V_i)$, the \emph{size} $s(V_i)$ of a vertex cluster $v_i$ is now defined as the sum of the sizes of its vertices. The \emph{MinMax-$p$-$(l,u)$-partition problem} is defined as follows. See Figure~\ref{fig: example minmax} for an example.
\begin{problem}[MinMax-$p$-$(l,u)$-partition problem]
	Let $G=(V,E,w,s)$ be a graph as defined above. Given two non-negative integers $l$ and $u$ with $l\leq u$ and a positive integer $p\leq n$, find a $p$-$(l,u)$-partition $P$ of $G$ such that the size of the largest cluster is minimized. 
\end{problem} 
\begin{figure}[b]
	\centering
	\begin{subfigure}[c]{0.48\textwidth}
		\includegraphics[scale=0.8]{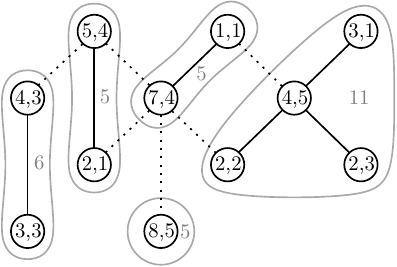}
		\subcaption{5-(3,12)-partition.}
		\label{fig: minmax 1}
	\end{subfigure}
	\begin{subfigure}[c]{0.48\textwidth}
		\includegraphics[scale=0.8]{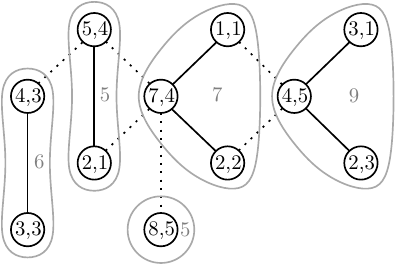}
		\subcaption{MinMax-5-(3,12)-partition.}
		\label{fig: minmax 2}
	\end{subfigure}
	\caption{Example of a MinMax-$p$-$(l,u)$-partition of a cactus graph. The value $x,y$ inside the vertices are their weight ($x$) and size ($y$). The gray numbers are the sizes of the clusters. Solving the $p$-$(l,u)$-partition problem for $p=5$, $l=3$ and $u=12$ with Algorithm~\ref{alg: cactus partition poly} might result in the partition shown in (a). With the adjusted algorithm, we can compute a 5-(3,12)-partition in which the weight of the largest cluster is minimized. This partition is shown in (b). In this case, this is also a possible MaxMin partition.}
	\label{fig: example minmax}
\end{figure}
In the related \emph{MaxMin} problem the size of the smallest cluster is maximized.
The standard MaxMin-$p$-partition problem, which does not include any weight constraints, can be solved in linear time on trees~\cite{frederickson1991optimal}. Agasi et al. proved that the addition of a upper weight constraint $u$ results in NP-hardness and presented a pseudopolynomial-time algorithm with a runtime of $\mathcal{O}(u^2n^3\log(s(V)))$, where $s(V)=\sum_{v\in V}s(v)$ is the size of the entire tree~\cite{agasi1993shifting}. We show that even with an additional lower weight bound, we obtain an algorithm with a similar runtime. \par 
First, we consider the decision variant of the given problem on trees: Given a tree $T$, non-negative integers $l$, $u$ ($l\leq u$) and $u_s$ and a positive integer $p\leq n$, is there partition $P=\{V_1,\ldots, V_p\}$ of $G$ with $p$ clusters such that $l\leq w(V_i)\leq u$ and $s(V_i)\leq u_s$ for all $V_i$? Again, we assume that for each vertex $v$ we have $w(v)\leq u$ and $s(v)\leq u_s$. Similarly to Section~\ref{sec: min cost problem}, we redefine the partition sets and adjust their computation accordingly. Every element $(x,y,k)\in S(v,i)$ corresponds to an extendable $(l,u)$-partition $P$ of $T_v^i$ with $\vert P\vert =k$, $w(P_v)=x$ and $s(P_v)=y$.
\begin{align*}
S_1 \oplus S_2 = &\{(x_1,y_1, k_1+k_2)\ \vert\ l\leq x_2, k_1+k_2\leq p,(x_1,y_1,k_1)\in S_1, \\
& (x_2,y_2,k_2)\in S_2\}\\
S_1 \otimes S_2=  &\{(x_1+x_2,y_1+y_2, k_1+k_2)\ \vert\ x_1+x_2\leq u, y_1+y_2\leq u_s,\\ &k_1+k_2-1\leq p, (x_1,y_1,k_1)\in S_1, (x_2,y_2,k_2)\in S_2\}
\end{align*}
After each $\odot$-operation and for each combination of $x$ and $k$, we keep only one tuple $(x,y_{min},k)$ where $y_{min}$ is minimal. Therefore, the number of elements in a partition set remains $\mathcal{O}(up)$ and the computation for the entire tree needs $\mathcal{O}(u^2p^2n)$ time. By using binary search over the value $u_s$, we can solve the min-max $p$-$(l,u)$-partition problem in time $\mathcal{O}(u^2p^2n\log(s(V)))$.\par 
To solve the corresponding MaxMin problem, we consider the decision problem with a parameter $l_s$ and the constraint $l\leq s(V_i)$ for all clusters $V_i$ in the partition. In this case, we keep only tuples $(x,y_{min},k)$ and perform binary search over the value $l_s$. This results in the same overall runtime.  
If the graph $G$ is a cactus graph, we can adjust the algorithm in a similar way and solve both problems with an additional factor of $n$ in the runtime. 
\begin{theorem}
	Let $G=(V,E,w,s)$ be a graph and $l$, $u$ and $p$ integers as defined above. The MinMax- and MaxMin-$(l,u)$-partition problems can be solved in $\mathcal{O}(u^2p^2n\log(s(V)))$ time if $G$ is a tree and in $\mathcal{O}(u^2p^2n^2\log(s(V)))$ time if $G$ is a cactus graph.
\end{theorem}
\paragraph{Special case}
In the case that sizes equal weights, i.e. $s(v)=w(v)$, we can also consider the following problem: Find a $p$-$(l,u)$-partition such that the weight of each cluster is bounded by $l$ and $u$, but the weight of the heaviest cluster has to be minimized (or the weight of the lightest cluster maximized). Then, we have two possible approaches with pseudopolynomial runtime. The first approach is to binary search over the parameter $u$ to find the best value. For each considered value of $u$, we apply the polynomial-time algorithm for the 
$p$-$(l,u)$-partition problem as shown in Section \ref{sec: poly time algorithm}. For trees, this approach has a runtime of $\mathcal{O}(p^4n\log(u))$. The second approach is to compute partition sets consisting of tuples $(x,y,k)\in S(v,i)$ corresponding to a partition $P$ of $T_v^i$ with size $k$ such that $x=w(P_v)$ and $y$ is the weight of the heaviest cluster in~$P$. 
\begin{align*}
S_1 \oplus S_2 = &\{(x_1,\max(y_1,y_2), k_1+k_2)\ \vert\ l\leq x_2,\ k_1+k_2\leq p,\\
&(x_1,y_1,k_1)\in S_1,\ (x_2,y_2,k_2)\in S_2\}\\
S_1 \otimes S_2=  &\{(x_1+x_2,\max(y_1,y_2,x_1+x_2), k_1+k_2)\ \vert\ x_1+x_2\leq u,\\ 
& k_1+k_2-1\leq p, (x_1,y_1,k_1)\in S_1,\ (x_2,y_2,k_2)\in S_2\}
\end{align*}
Again, we keep only tuples $(x,y_{min},k)$ in our computation. Then, the desired partition corresponds to the tuple in which $y_{min}$ is the minimum value over all tuples $(x,y,p)\in S(r)$ fulfilling $l\leq x\leq u$. For trees, this approach has a runtime of $\mathcal{O}(u^2p^2n)$, which might be favorable for smaller values of~$u$.

\subsection{Capacity constrained \texorpdfstring{$(l,u)$}{(l,u)}-partition problem}\label{sec: capacity constrained problem}
Let $G=(V,E,w,c)$ be a graph with weights $w(v)$ on its vertices and capacities $c(e)$ on its edges. Given a vertex partition $P$ of $G$, the \emph{capacity} of a cluster $V_i$ is defined as the sum of the capacities of all edges that connect a vertex from $V_i$ to a vertex from another cluster, i.e. $c(V_i)=\sum_{v\in V_i, v'\notin V_i}c(v,v')$. The \emph{MinNum-$(l,u,u_c)$-partition problem} is defined as follows. See Figure~\ref{fig: example capacity} for an example.
\begin{problem}[MinNum-$(l,u,u_c)$-partition problem]
	Let $G=(V,E,w,c)$ be a graph as defined above. Given three non-negative integers $l$, $u$ and $u_c$ with $l\leq u$, find a $(l,u)$-partition $P$ of $G$ such that $c(V_i)\leq u_c$ for all clusters $V_i\in P$ and the number of clusters is minimized.
\end{problem} 
\begin{figure}[b]
	\centering
	\begin{subfigure}[c]{0.48\textwidth}
		\includegraphics[scale=0.8]{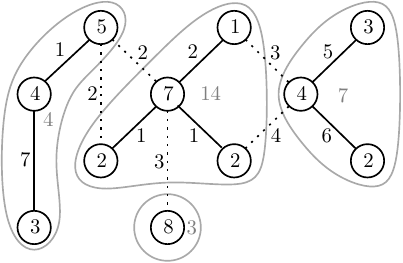}
		\subcaption{MinNum-(3,12)-partition.}
		\label{fig: capacity 1}
	\end{subfigure}
	\begin{subfigure}[c]{0.48\textwidth}
		\includegraphics[scale=0.8]{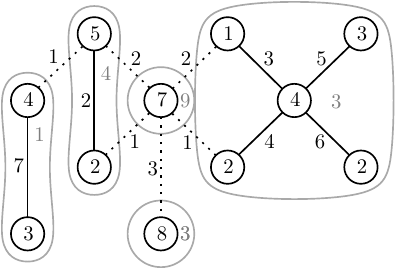}
		\subcaption{MinNum-(3,12,9)-partition.}
		\label{fig: capacity 2}
	\end{subfigure}
	\caption{Example of a MinNum-$(l,u)$- and $(l,u,u_c)$-partition of a cactus graph. The values inside the vertices are their weights and the gray numbers are the capacities of the clusters. Solving the MinNum-$(l,u)$-partition problem for $l=3$ and $u=12$ with Algorithm~\ref{alg: cactus partition poly} might result in the partition shown in (a). If the capacity of the clusters is bounded by $u_c=9$, we can find a feasible partition as in (b) with the adjusted algorithm. Note that the additional constraint can increase the size of the partition.}
	\label{fig: example capacity}
\end{figure}
Perl and Snir considered this problem without the lower weight bound $l$ and showed that it is NP-hard even if the graph is a tree. They presented pseudopolynomial algorithms with different runtimes, namely  $\mathcal{O}(u^2n^3)$, $\mathcal{O}(u_c^2n^3)$ and $\mathcal{O}(u_cn^4)$~\cite{perl1983circuit}.\par
First, we consider a tree $T$ with root $r$. Again, we can redefine the partition sets to include capacities as well. A tuple $(x,y,k)\in S(v,i)$ corresponds to an extendable $(l,u)$-partition $P$ of $T_v^i$ with $k$ clusters such that $x=w(P_v)$ and $y=c(P_v)$ is the capacity of the cluster containing the node $v$. Obviously, only edges in the subtree $T_v^i$ are considered for the capacity. We adjust the computation accordingly and compute
\begin{align*}
S(v,0) &= \{(w(v),0,0)\}\\
S(v,i)&=\min(S(v,i-1)\odot S(v_i)).
\end{align*}

In the $\oplus$-operation, the two considered clusters are not merged and therefore the capacity for both clusters increases by the capacity of the edge $(v,v_i)$. Hence, we have to check if the resulting capacities fulfill the capacity constraint $u_c$. For $S_1=S(v,i-1)$ and $S_2=S(v_i)$, we define: 
\begin{align*}
S_1 \oplus S_2 = &\{(x_1,y_1+c(v,v_i),k_1+k_2)\ \vert\ l\leq x_2,\ y_2+c(v,v_i)\leq u_c,\ \\ 
&y_2+c(v,v_i)\leq u_c,\ (x_1,y_1,k_1)\in S_1,\ (x_2,y_2,k_2)\in S_2\}\\
S_1 \otimes S_2 = &\{(x_1+x_2,y_1+y_2,k_1+k_2-1)\ \vert\ x_1+x_2\leq u,\ y_1+y_2\leq u_c\\ 
&(x_1,y_1,k_1)\in S_1,\ (x_2,y_2,k_2)\in S_2\}
\end{align*}
Note that it is not necessary to keep all computed tuples. Since the capacity for the clusters is only bounded from above, it suffices to keep for each combination of $x$ and $k$ the one with the smallest capacity.
We use the $\min$-operation to keep only tuples $(x,y_{min},k)$. Thus, the number of elements in the partition sets remains $\mathcal{O}(un)$ and the computation requires $\mathcal{O}(u^2n^2)$ time. In the end, the MinNum-$(l,u,u_c)$-partition corresponds to the tuple in $S(r)$ in which $k$ is the minimum over all tuples $(x,y,k)\in S(r)$ satisfying $l\leq x\leq u$. Using this approach, we can solve the problem in time $\mathcal{O}(u^2n^3)$ for trees.\par 
In case the graph $G$ is a cactus graph rooted in a hinge $r$, some additional adjustments are needed to partition a cycle. Let $C=\langle w_1,w_2,\ldots,w_m\rangle $ be a cycle in $G$. In some configuration $j$ of $C$, the edge $(w_{j},w_{j+1})$ is removed for the following computation. If the capacity of this edge is not added to the capacities of the clusters containing $w_{j}$ and $w_{j+1}$ respectively, then both vertices have to end up in the same cluster - which has to contain all vertices of the cycle. We compute two different partition sets. The sets $S''_j$ are computed with the $\odot$-operation and the sets $S'_j$ are obtained only with $\otimes$-operations. 
\begin{align*}
S''_j(v,i) &= \min(S''(v,i-1)\odot S''_j(v_{i}))\\
S'_j(v,i) &= \min(S'(v,i-1)\otimes S'_j(v_{i}))
\end{align*}
We initialize $S'_j(w_i,0)$ as in Section \ref{sec: simple algorithm} and additional sets $S''_j(w_i,0)$ as follows:
\begin{align*}
S''_j(w_i,0) = \begin{cases}
\{(x,y+c(w_{j},w_{j+1}),k)\ \vert\ (x,y,k)\in S'_j(w_i,0)\} &\mbox{if } i=j \text{ or } j+1,\\
S'_j(w_i,0) &\mbox{otherwise.}
\end{cases}
\end{align*}
In the end, we compute the set $S_c$, which is returned by the procedure, as follows:
\begin{equation*}
S_c = \min\left(\bigcup_{j=1}^{m-1} (S'_j(w_1)\cup S''_j(w_1))\right).
\end{equation*}
Note that for every tuple $(x,y,k)\in S'_j(w_1)$, we have a corresponding tuple $(x,y+2 c(w_{m-j+1},w_{m-j+2}),k)\in S''_j(w_1)$, which resulted from elements obtained by repeatedly merging clusters, i.e. only $\otimes$-operations inside the $\odot$-operations. 
The latter does not contain the correct capacity because all vertices of the cycle are contained in the same cluster. However, this incorrect tuple is removed with the $\min$-operation, as it has a higher capacity value than $(x,y,k)$. The computation of two partition sets for each edge inside a cycle does not change the overall runtime of this algorithm.
\begin{theorem}
	Let $G=(V,E,w,c)$ be a graph and $l$, $u$ and $u_c$ three integers as defined above. The MinNum-$(l,u,u_c)$-partition problem can be solved in $\mathcal{O}(u^2n^3)$ time if $G$ is a tree and in $\mathcal{O}(u^2n^4)$ time if $G$ is a cactus graph.
\end{theorem}

\section{Conclusion and further work}\label{sec: conclusion}
We considered different $(l,u)$-partition problems for weighted trees and cactus graphs. We presented a method to partition a cycle inside a cactus graph efficiently and proved that the resulting algorithm solves the general $(l,u)$-partition problems in polynomial time. For the partition into a fixed number $p$ of clusters, we obtained an algorithm with a runtime of $\mathcal{O}(p^4n^2)$. We showed that the minimum or maximum number of clusters can be found in $\mathcal{O}(n^6)$ time. Whereas previous research focused on solving the decision problems, we considered the computation problem as well and presented a method that computes the desired partitions in polynomial time.\par 
As we briefly explained, the presented partition algorithm for cactus graphs can be used to obtain a skeleton-based decomposition for simple polygons, which allows us to solve the corresponding decomposition problem in polynomial time. If the degree of the branching points in the skeleton is bounded by some constant, which is the case in the discrete skeleton that we apply, the decomposition problem can be solved in $\mathcal{O}(n^5)$ time (see Remark~\ref{rem: bounded length/degree}).\par
Furthermore, we showed that the presented partition approach can be used as an algorithmic framework to solve other partition problems for weighted trees and cactus graphs. We considered a selection of different $(l,u)$-partition problems that include additional constraints. All these problems known to be NP-hard even for trees. We showed how our partition algorithm can be adjusted to solve these problems in pseudopolynomial time for both trees and cactus graphs. All previous algorithmic results considered only trees and $l=0$. For other problems such as the MinNum-$(l,u)$-partition problem, we can observe that the runtime of the algorithms increased considerably with the addition of a lower weight bound $l\leq 0$. With our adjusted partition algorithm, we were able to include a lower weight bound without an increase in runtime.\par 
Our partition approach can potentially be adjusted to solve further partition problems as well. The general $p$-$(l,u)$-partition problem proved to be solvable in polynomial time for trees and cactus graphs but NP-hard for graphs with bounded treewidth. The complexity for other graph classes such as for example outerplanar graphs remains open.

\bibliographystyle{abbrv}
\bibliography{references}  

\end{document}